\font\twlgot =eufm10 scaled \magstep1 \font\egtgot =eufm8
\font\sevgot =eufm7 \font\twlmsb =msbm10 scaled \magstep1
\font\egtmsb =msbm8 \font\sevmsb =msbm7
\def\pgot{\fam\gotfam\twlgot}
\def\got{\protect\pgot}
\def\Bbb{\protect\pBbb}
\def\pBbb{\relax\ifmmode\expandafter\Bb\else\typeout{You cann't use
Bbb in text mode}\fi}
\def\Bb #1{{\fam\msbfam\relax#1}}
\newcommand{\gS}{{\got S}}
\newcommand{\gF}{{\got F}}
\newcommand{\ccG}{{\got g}}
\def\thebibliography#1{\bigskip\section*{}\bigskip\list
{$^{\arabic{enumi}}$}{\settowidth\labelwidth{#1}\leftmargin\labelwidth
\advance\leftmargin\labelsep
\usecounter{enumi}}
\def\newblock{\hskip .11em plus .33em minus .07em}
\sloppy\clubpenalty4000\widowpenalty4000 \sfcode`\.=1000\relax}
\def\op#1{\mathop{\fam0 #1}\limits}
\newcommand{\Ker}{{\rm Ker\,}}
\newcommand{\im}{{\rm Im\,}}
\newcommand{\beq}{\begin{equation}}
\newcommand{\eeq}{\end{equation}}
\newcommand{\ben}{\begin{eqnarray}}
\newcommand{\een}{\end{eqnarray}}
\newcommand{\be}{\begin{eqnarray*}}
\newcommand{\ee}{\end{eqnarray*}}
\newcommand{\bea}{\begin{eqalph}}
\newcommand{\eea}{\end{eqalph}}
\newcommand{\cA}{{\cal A}}
\newcommand{\cP}{{\cal P}}
\newcommand{\cL}{{\cal L}}
\newcommand{\cE}{{\cal E}}
\newcommand{\cF}{{\cal F}}
\newcommand{\cS}{{\cal S}}
\newcommand{\cC}{{\cal C}}
\newcommand{\cO}{{\cal O}}
\newcommand{\cG}{{\cal G}}
\newcommand{\bL}{{\bf L}}
\newcommand{\bu}{{\bf u}}
\newcommand{\bb}{{\bf b}}
\newcommand{\al}{\alpha}
\newcommand{\bt}{\beta}
\newcommand{\dl}{\delta}
\newcommand{\la}{\lambda}
\newcommand{\La}{\Lambda}
\newcommand{\f}{\phi}
\newcommand{\m}{\mu}
\newcommand{\g}{\gamma}
\newcommand{\th}{\theta}
\newcommand{\vt}{\vartheta}
\newcommand{\vf}{\varphi}
\newcommand{\up}{\upsilon}
\newcommand{\di}{{\rm dim\,}}
\newcommand{\si}{\sigma}
\newcommand{\Si}{\Sigma}
\newcommand{\w}{\wedge}
\newcommand{\wt}{\widetilde}
\newcommand{\ol}{\overline}
\newcommand{\dr}{\partial}
\newcommand{\ar}{\op\longrightarrow}
\newcommand{\llr}{\op\longleftarrow}
\newcommand{\lto}{\leftarrow}
\newcommand{\ot}{\otimes}
\newcommand{\ve}{\varepsilon}
\newcommand{\rdr}{\stackrel{\leftarrow}{\dr}{}}
\let\ssection=\section
\renewcommand{\section}{\setcounter{equation}{0}\ssection}
\newcounter{eqalph}
\newcounter{equationa}
\newcounter{remark}
\newcounter{example}
\newcounter{theorem}
\newcounter{proposition}
\newcounter{lemma}
\newcounter{corollary}
\newcounter{definition}
\newenvironment{eqalph}{\stepcounter{equation}
\setcounter{equationa}{\value{equation}} \setcounter{equation}{0}

\begin{eqnarray}}{\end{eqnarray}\setcounter{equation}{\value{equationa}}}
\def\theremark{\arabic{remark}}
\def\thetheorem{\arabic{theorem}}
\newenvironment{proof}{
{\it Proof:}}{}
\newenvironment{theo}{\refstepcounter{theorem}
{\bf Theorem \thetheorem:}}{}
\newenvironment{prop}{\refstepcounter{theorem}
{\bf Proposition \thetheorem:}}{}
\newenvironment{defi}{\refstepcounter{theorem}
{\bf Definition \thetheorem:}}{}
\newenvironment{cond}{\refstepcounter{theorem}
{\bf Condition \thetheorem:}}{}
\newcommand{\mar}[1]{}
\begin{document}
\hbox{}

{\parindent=0pt

{\large\bf On the notion of gauge symmetries of generic Lagrangian
field theory}
\bigskip

{\sc G.Giachetta}\footnote{Electronic mail:
giovanni.giachetta@unicam.it}

{\sl Department of Mathematics and Informatics, University of
Camerino, 62032 Camerino (MC), Italy}

\medskip

{\sc L.Mangiarotti}\footnote{Electronic mail:
luigi.mangiarotti@unicam.it}

{\sl Department of Mathematics and Informatics, University of
Camerino, 62032 Camerino (MC), Italy}

\medskip

{\sc G. Sardanashvily}\footnote{Electronic mail:
sardanashvi@phys.msu.ru}

{\sl Department of Theoretical Physics, Moscow State University,
117234 Moscow, Russia}

\bigskip
\bigskip

General Lagrangian theory of even and odd fields on an arbitrary
smooth manifold is considered. Its non-trivial reducible gauge
symmetries and their algebra are defined in this very general
setting by means of the inverse second Noether theorem. In
contrast with gauge symmetries, non-trivial Noether and
higher-stage Noether identities of Lagrangian theory can be
intrinsically defined by constructing the exact Koszul--Tate
complex. The inverse second Noether theorem that we prove
associates to this complex the cochain sequence with the ascent
operator whose components define non-trivial gauge and
higher-stage gauge symmetries. These gauge symmetries are said to
be algebraically closed if the ascent operator can be extended to
a nilpotent operator. The necessary conditions for this extension
are stated. The characteristic examples of Yang--Mills supergauge
theory, topological Chern--Simons theory, gauge gravitation theory
and topological BF theory are presented.

}

\bigskip
\bigskip

\noindent {\bf I. INTRODUCTION}
\bigskip

Treating gauge symmetries of Lagrangian field theory, one is
traditionally based on an example of the Yang--Mills gauge theory
of principal connections on a principal bundle $P\to X$ with a
structure Lie group $G$. In this theory, gauge transformations are
defined as vertical automorphisms of $P$. Infinitesimal generators
of one-parameter groups of these automorphisms are $G$-invariant
vertical vector fields on $P$. They are identified with global
sections $\xi$ of the quotient $VP/G$ of the vertical tangent
bundle $VP$ of $P\to X$ with respect to the right action of $G$ on
$P$. These sections take a local form $\xi=\xi^p(x)e_p$ where
$\{e_p\}$ is the basis for the Lie algebra $\ccG$ of $G$. They
constitute a Lie $C^\infty(X)$-algebra $\ccG(X)$ with respect to
the bracket
\mar{804}\beq
[\xi,\eta]=c^r_{pq}\xi^p\eta^q e_r, \label{804}
\eeq
where $c^r_{pq}$ are structure constants of $\ccG$. Being
$G$-equivariant, principal connections on a principal bundle $P$
are represented by global sections of the quotient
\mar{s0}\beq
C=J^1P/G \label{s0}
\eeq
of the jet bundle $J^1P$ of $P$ which is coordinated by $(x^\m,
a^r_\m)$.$^1$ Vertical automorphisms of $P$ yield automorphisms of
the bundle $C$ (\ref{s0}). Infinitesimal generators of
one-parameter groups of these automorphisms are given by vector
fields
\mar{801}\beq
u_\xi=(\dr_\la\xi^r + c^r_{pq}a^p_\la\xi^q)\frac{\dr}{\dr a^r_\la}
\label{801}
\eeq
on $C$. They form a real Lie algebra
\mar{805}\ben
&& u_\xi + u_\eta= u_{\xi+\eta}, \qquad \la u_\xi=u_{\la\xi}, \qquad
\la\in\Bbb R, \nonumber\\
&& [u_\xi,u_\eta]=u_{[\xi,\eta]}, \label{805}
\een
which is isomorphic to the Lie algebra (\ref{804}) seen as a real
algebra (but not the $C^\infty(X)$-one because $u_{f\xi}\neq
fu_\xi$, $f\in C^\infty(X)$). This isomorphism is a linear
differential operator on sections of $VP/G\to X$. The vector
fields (\ref{801}) are exact symmetries of the Yang--Mills
Lagrangian $L_{YM}$, i.e., the Lie derivative of $L_{YM}$ along
any $u_\xi$ (\ref{801}) vanishes. They are called the gauge
symmetries of $L_{YM}$ depending on gauge parameters $\xi\in
\ccG(X)$.

This notion of gauge symmetries has been generalized to Lagrangian
field theory on any fiber bundle $Y\to X$ over an $n$-dimensional
smooth manifold $X$ as follows.$^{3,4}$ Given a $k$-order jet
manifold $J^kY$ of $Y$, let us consider the pull-back
\be
T^kY=TY\op\times_Y J^kY
\ee
of the tangent bundle $TY$ of $Y$ onto $J^kY$ over $Y$. Sections
of $T^kY\to J^kY$ are called generalized vector fields on $Y$.$^2$
A generalized vector field $u$ is said to be a variational
symmetry of a Lagrangian $L$ if the Lie derivative of $L$ along
$u$ is a variationally trivial Lagrangian. Variational symmetries
constitute a real subspace $\cG_L$ of the $C^\infty(J^kY)$-module
of generalized vector fields.

\begin{defi} \mar{s7} \label{s7}
Let $E\to X$ be a vector bundle and $E(X)$ the $C^\infty(X)$
module $E(X)$ of sections of $E\to X$. Let $\zeta$ be a linear
differential operator on $E(X)$ taking values in the space $\cG_L$
of variational symmetries. Elements $u_\f=\zeta(\f)$ of $\im\zeta$
are called gauge symmetries of a Lagrangian $L$ parameterized by
sections $\f$ of $E\to X$.
\end{defi}

Equivalently, these gauge symmetries are given by a section
$\wt\zeta$ of a fiber bundle
\be
TY\op\times_Y J^kY\op\times_Y (J^mE\op\times_X Y)\to
J^kY\op\times_Y (J^mE\op\times_X Y)
\ee
such that $u_\f=\zeta(\f)=\wt\zeta\circ\f$ for any section $\f$ of
$E\to X$.

Gauge symmetries possess the following two important properties.

(i) Let $E'\to X$ be another vector bundle and $\zeta'$ a linear
$E(X)$-valued differential operator on the $C^\infty(X)$-module
$E'(X)$ of sections of $E'\to X$. Then $u_{\zeta'(\vf)}
=(\zeta\circ\zeta')(\vf)$ are also gauge symmetries of $L$ which
factorize through the gauge symmetries $u_\f$.

(ii) The direct and inverse second Noether theorems associate to
gauge symmetries the Noether identities (henceforth NI), which
variational derivatives of $L$ satisfy.$^{3,4}$

Definition \ref{s7} of gauge symmetries can be generalized as
follows.

\begin{defi} \mar{s8} \label{s8}
Let a differential operator $\zeta$ in Definition \ref{s7} need
not be necessarily linear. Then elements of $\im\zeta$ are called
generalized gauge symmetries.
\end{defi}

However, the Noether theorems fail to hold for generalized gauge
symmetries. Definition \ref{s7} of gauge symmetries has been
extended to Lagrangian theory of odd fields by replacement of
$C^\infty(X)$-modules and fiber bundles with Grassmann-graded
$C^\infty(X)$-modules and graded manifolds whose bodies are fiber
bundles, respectively.$^{2,4}$

Gauge symmetries of Lagrangian field theory are thought to
characterize its degeneracy. A problem is that any Lagrangian
possesses gauge symmetries and, therefore, one must separate them
into the trivial and non-trivial ones. Moreover, gauge symmetries
can be reducible, i.e., $\Ker\zeta\neq 0$. Let there exist a
vector bundle $E_1\to X$ and a linear differential operator
$\zeta_1$ on sections of $E_1\to X$ taking values in $\Ker \zeta$.
Elements $\zeta_1(\f_1)$ of $\im\zeta_1$ are called the
first-stage gauge symmetries whose gauge parameters are sections
$\f_1$ of $E_1\to X$.  Since first-stage gauge symmetries in turn
can be reducible, second-stage gauge symmetries are defined, and
so on. Higher-stage gauge symmetries must also be separated into
the trivial and non-trivial ones. This is important because
non-trivial gauge and higher-stage gauge symmetries define the
BRST extension of original Lagrangian field theory for the purpose
of its quantization.$^{5-8}$

Another problem is that gauge symmetries need not form an
algebra.$^{5,9,10}$ The Lie bracket $[u_\f,u_{\f'}]$ of gauge
symmetries $u_\f,u_{\f'}\in \im \zeta$ is a variational symmetry,
but it need not belong to $\im \zeta$.

In contrast with gauge symmetries, non-trivial NI and higher-stage
NI of Lagrangian field theory are well described in homology
terms.$^{8,11}$ Therefore, we define non-trivial gauge and
higher-stage gauge symmetries as those associated to complete
non-trivial NI and higher-stage NI in accordance with the inverse
second Noether theorem (Definitions \ref{s4} and \ref{s6}).

Lagrangian theory of even and odd fields on an $n$-dimensional
smooth real manifold $X$ is adequately formulated in terms of the
Grassmann-graded variational bicomplex.$^{2,4,6,12}$ In accordance
with general theory of NI of differential operators,$^{13}$ NI of
Lagrangian theory are represented by cycles of the chain complex
(\ref{v042}), whose boundaries are treated as trivial NI and whose
homology describes non-trivial NI modulo the trivial
ones.$^{8,11}$ Lagrangian field theory is called degenerate if its
Euler--Lagrange operator satisfies non-trivial NI. The latter obey
first-stage NI, and so on. To describe $(k+1)$-stage NI, let us
assume that non-trivial $k$-stage NI are generated by a projective
$C^\infty(X)$-module $\cC_{(k)}$ of finite rank, whose elements
are called complete NI. In this case, $(k+1)$-stage NI are
represented by $(k+2)$-cycles of the chain complex (\ref{v94})
where $N=k$. If a certain homology condition (Condition
\ref{v155}) holds, trivial $(k+1)$-stage NI are identified with
$(k+2)$-boundaries of this complex. Then its $(k+2)$-homology
describes non-trivial $(k+1)$-stage NI. Degenerate Lagrangian
field theory is called $k$-stage reducible if there exist
non-trivial $k$-stage NI, but all $(k+1)$-stage NI are trivial. In
this case, the chain complex (\ref{v94}) where $N=k$ is exact. It
is called the Koszul--Tate (henceforth KT) complex. The
nilpotentness of its boundary operator (\ref{v92}) is equivalent
to all complete non-trivial NI and higher-stage NI.$^{8,11}$

Recall that the notion of reducible NI has come from that of
reducible constraints,$^{6,14}$ but NI unlike constraints are
differential equations. Therefore, the regularity condition for
the KT complex of constraints is replaced with homology Condition
\ref{v155}.$^{8,11}$

For the sake of simplicity, we here restrict our consideration to
finitely reducible Lagrangian field theory which possesses no
non-trivial $(N+1)$-stage Noether identities for some integer $N$.
In this case, the KT operator (\ref{v92}) and the gauge operator
(\ref{w108'}) contain finite terms.

Different variants of the second Noether theorem have been
suggested in order to relate reducible NI and gauge
symmetries.$^{3,4,6,15}$ Formulated in homology terms, the inverse
second Noether theorem (Theorem \ref{w35}) associates to the KT
complex (\ref{v94}) the cochain sequence (\ref{w108}) with the
ascent operator $\bu$ (\ref{w108'}).$^8$  We define complete
non-trivial gauge and higher-stage gauge symmetries of Lagrangian
field theory as components of this ascent operator, called the
gauge operator (Section IV). The gauge operator unlike the KT one
is not nilpotent, unless non-trivial gauge symmetries are abelian.
This is the cause why an intrinsic definition of non-trivial gauge
and higher-stage gauge symmetries meets difficulties. Defined by
means of the inverse second Noether theorem, non-trivial gauge and
higher-stage gauge symmetries are parameterized by odd and even
ghosts, but not gauge parameters. Herewith, $k$-stage ghosts form
the $(\op\w^nT^*X)$-duals of the modules $\cC_{(k+1)}$, and a
$k$-stage gauge symmetry acts on $(k-1)$-stage ghosts.

For instance, the gauge operator of the gauge symmetries
(\ref{801}) reads
\mar{s10}\beq
\bu= (d_\la c^r +c^r_{pq}a^p_\la c^j )\frac{\dr}{\dr a_\la^r},
\label{s10}
\eeq
where odd ghosts $c^r$ are the generating elements of the exterior
Grassmann algebra $\w\ccG^*$ of the Lie coalgebra $\ccG^*$. This
gauge operator is not nilpotent, unless the Lie algebra $\ccG$ is
commutative. However, $\bu$ (\ref{s10}) is extended to the
nilpotent operator
\be
\bb=\bu +\g=\bu  -\frac12 c^r_{ij}c^ic^j\frac{\dr}{\dr c^r}
\ee
by means of an additional summand $\g$ acting on ghosts. This
nilpotent extension exists because gauge symmetries (\ref{801})
form the Lie algebra (\ref{805}). It is the well known BRST
operator in quantum gauge theory.

Generalizing this example, we say that gauge and higher-stage
gauge symmetries are algebraically closed if the gauge operator
$\bu$ (\ref{w108'}) admits the nilpotent BRST extension $\bb$
(\ref{w109}) where $k$-stage gauge symmetries are extended to
$k$-stage BRST transformations acting both on $(k-1)$-stage and
$k$-stage ghosts (Section V). We show that this nilpotent
extension exists only if the higher-stage gauge symmetry
conditions hold off-shell (Proposition \ref{lmp6}) and only if the
Lie bracket of gauge symmetries is a generalized gauge symmetry
factorizing through these gauge symmetries (Proposition
\ref{830}). For instance, this is the case of abelian reducible
Lagrangian theories and irreducible Lagrangian theories whose
gauge symmetries form a Lie algebra. In abelian reducible
theories, the gauge operator $\bu$ itself is nilpotent.

In Sections VI -- IX, the following characteristic examples are
considered: (i) Yang--Mills supergauge theory exemplifying theory
of odd fields, (ii) topological Chern--Simons theory where some
gauge symmetries become trivial if $\di X=3$, (iii) gauge
gravitation theory whose gauge symmetries are general covariant
transformations, and (iv) topological BF theory with reducible
gauge symmetries.

\bigskip
\bigskip

\noindent {\bf II. GRASSMANN-GRADED LAGRANGIAN FIELD THEORY}
\bigskip

As was mentioned above, Lagrangian theory of even and odd fields
is adequately formulated in terms of the variational bicomplex on
fiber bundles and graded manifolds.$^{2,4,12}$ Let us consider a
composite bundle $F\to Y\to X$ where $F\to Y$ is a vector bundle
provided with bundle coordinates $(x^\la, y^i, q^a)$. Jet
manifolds $J^rF$ of $F\to X$ are also vector bundles $J^rF\to
J^rY$ coordinated by $(x^\la, y^i_\La, q^a_\La)$, $0\leq |\La|\leq
r$, where $\La=(\la_1...\la_k)$, $|\La|=k$, denote symmetric
multi-indices. For the sake of convenience, the value $r=0$
further stands for $F$ and $Y$. Let $(J^rY,\cA_r)$ be a graded
manifold whose body is $J^rY$ and whose structure ring $\cA_r$ of
graded functions consists of sections of the exterior bundle
\be
\w (J^rF)^*=\Bbb R\op\oplus (J^rF)^*\oplus\op\w^2
(J^rF)^*\op\oplus\cdots,
\ee
where $(J^rF)^*$ is the dual of $J^rF\to J^rY$. The local odd
basis for this ring is $\{c^a_\La\}$, $0\leq |\La|\leq r$. Let
$\cS^*_r[F;Y]$ be the differential graded algebra (henceforth DGA)
of graded differential forms on the graded manifold
$(J^rY,\cA_r)$. The inverse system of jet manifolds $J^{r-1}Y
\leftarrow J^rY$ yields the direct system of DGAs
\be
\cS^*[F;Y]\ar \cS^*_1[F;Y]\ar\cdots \cS^*_r[F;Y]\ar\cdots.
\ee
Its direct limit $\cS^*_\infty[F;Y]$ is the DGA of all graded
differential forms on graded manifolds $(J^rY,\cA_r)$. Recall the
formulas
\be
\f\w\f' =(-1)^{|\f||\f'| +[\f][\f']}\f'\w \f, \qquad d(\f\w\f')=
d\f\w\f' +(-1)^{|\f|}\f\w d\f,
\ee
where $[\f]$ denotes the Grassmann parity. The DGA
$S^*_\infty[F;Y]$ contains the subalgebra $\cO^*_\infty Y$ of all
exterior forms on jet manifolds $J^rY$. It is an $\cO^0_\infty
Y$-algebra locally generated by elements
$(c^a_\La,dx^\la,dy^i_\La, dc^a_\La)$, $0\leq |\La|$. The
collective symbol $(s^A)$ further stands for the tuple
$(y^i,c^a)$, called the local basis for the DGA
$\cS^*_\infty[F;Y]$. We denote $[A]=[s^A]=[s^A_\La]$.

The DGA $\cS^*_\infty[F;Y]$ is decomposed into the
Grassmann-graded variational bicomplex of modules
$\cS^{k,r}_\infty[F;Y]$ of $r$-horizontal and $k$-contact graded
forms locally generated by one-forms $dx^\la$ and
$\th^A_\La=ds_\La^A -s^A_{\la +\La} dx^\la$. It contains the
variational subcomplex
\be
0\to \Bbb R\ar \cS^0_\infty[F;Y]\ar^{d_H}\cS^{0,1}_\infty[F;Y]
\cdots \ar^{d_H} \cS^{0,n}_\infty[F;Y]\ar^\dl
\cS^{1,n}_\infty[F;Y],
\ee
where
\be
d_H(\f)=dx^\la\w d_\la\f, \qquad d_\la =\dr_\la +
\op\sum_{0\leq|\La|} s_{\la\La}^A\dr^\La_A,
\ee
is the total differential and
\be
\dl L= \th^A\w \cE_A d^nx=\op\sum_{0\leq|\La|} (-1)^{|\La|}\th^A\w
d_\La (\dr^\La_A \cL) d^nx, \qquad
 d_\La=d_{\la_1}\cdots d_{\la_k},
\ee
is the variational operator. Lagrangians and Euler--Lagrange
operators are defined as even elements
\mar{0709}\beq
L=\cL d^nx\in \cS^{0,n}_\infty[F;Y], \qquad \dl L= \th^A\w \cE_A
d^nx\in \cS^{1,n}_\infty[F;Y]. \label{0709}
\eeq
Further, we call a pair $(\cS^*_\infty[F;Y],L)$ the Lagrangian
field theory.

Cohomology of the variational bicomplex has been
obtained.$^{2,12}$ Let us mention the following results.

\begin{prop} \mar{811} \label{811}
Any variationally trivial (i.e., $\dl$-closed) graded density
$L\in \cS^{0,n}_\infty[F;Y]$ takes the form $L=d_H\psi + h_0\vf$,
where $\vf$ is a closed exterior $n$-form on $Y$ where
$h_0(dy^i)=y^i_\m dx^\m$. In particular, any odd variationally
trivial graded density is $d_H$-exact.
\end{prop}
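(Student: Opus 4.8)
The plan is to derive the statement from the already-computed cohomology of the Grassmann-graded variational bicomplex$^{2,12}$, by reducing it to the corresponding classical statement for the even variational complex on $\cO^*_\infty Y$. The essential observation is that the odd generators $c^a_\La$ originate from the \emph{vector} bundle $F\to Y$, so the fibrewise scalar dilation $c^a_\La\mapsto tc^a_\La$ provides a contraction of the graded manifolds $(J^rY,\cA_r)$ onto their bodies $J^rY$. This contraction is what forces the odd directions to carry no cohomology, leaving only an even obstruction.

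First I would introduce the grading by polynomial degree in the odd generators $c^a_\La$ together with the fibrewise homotopy operator obtained by integrating the dilation $c^a_\La\mapsto tc^a_\La$ over $t\in[0,1]$. One checks that this is a contracting homotopy for the total graded exterior differential which strictly lowers the odd polynomial degree, so that the graded de Rham cohomology of $(J^rY,\cA_r)$ reduces to the ordinary de Rham cohomology of the jet manifolds $J^rY$. Because this operator respects the $(k,r)$-bigrading of the bicomplex, the reduction passes to the whole variational bicomplex; in particular the $(0,n)$ variational cohomology of $\cS^*_\infty[F;Y]$ coincides, via the embedding of the even subalgebra, with that of $\cO^*_\infty Y$.

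Next I would invoke the classical statement for the even variational complex (a special case of the cited cohomology computation): a $\dl$-closed density in $\cO^{0,n}_\infty Y$ has the form $d_H\psi+h_0\vf$ with $\vf$ a closed exterior $n$-form on $Y$, this being the assertion that the $(0,n)$ variational cohomology is $h_0$-isomorphic to the de Rham cohomology $H^n(Y)$ (the infinite jet space retracting onto $Y$). Composing the odd reduction of the previous step with this identification yields the decomposition $L=d_H\psi+h_0\vf$ for an arbitrary $\dl$-closed graded density $L\in\cS^{0,n}_\infty[F;Y]$.

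The main obstacle is bookkeeping rather than conceptual: one must verify that the odd homotopy operator is well defined on the direct limit $\cS^*_\infty[F;Y]$, commutes correctly with $d_H$, and delivers a horizontal primitive $\psi$ in the correct module $\cS^{0,n-1}_\infty[F;Y]$, so that after the reduction the only surviving term is the \emph{even} class $h_0\vf$. The ``in particular'' clause is then immediate: since $Y$ is an ordinary manifold, every closed $n$-form $\vf$ on $Y$ is Grassmann-even, so an odd $\dl$-closed density can carry no $h_0\vf$ component and must be $d_H$-exact, with $\psi$ chosen odd.
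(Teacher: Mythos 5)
The paper itself offers no proof of Proposition \ref{811}: it is quoted from Refs.~2 and 12, where the cohomology of the entire graded variational bicomplex is computed by proving the algebraic Poincar\'e lemma on a coordinate chart (handling even and odd generators together by explicit homotopy operators) and then globalizing by the abstract de Rham theorem for fine sheaves over $Y$. Your route is genuinely different and, as a derivation of this single proposition from the already-known even case, it is viable: because $F\to Y$ is a vector bundle the dilation $c^a_\La\mapsto tc^a_\La$ is globally defined, the grading by polynomial degree in the odd generators is preserved by $d_H$ and by $\dl$, so the degree-zero component $L_0$ of a $\dl$-closed $L$ is a $\dl$-closed density in $\cO^{0,n}_\infty Y$ to which the classical theorem applies, and it remains to show the positive-degree components are $d_H$-exact. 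The one step you should not dismiss as bookkeeping is precisely there: the contraction $\iota_E$ with the odd Euler field $E=\sum_\La c^a_\La\dr^\La_a$ lowers contact degree, so your homotopy for the total differential $d$ does not respect the $(k,r)$-bigrading, and exactness of the $d$-complex in positive odd degree does not by itself yield $d_H$-exactness of densities; some interior-Euler-operator argument is needed to descend to the variational complex. The clean repair stays entirely inside the paper's toolkit: $E$ is the prolongation (\ref{0672}) of $c^a\dr_a$, so the first variational formula (\ref{xx10}) gives $\bL_E L=c^a\cE_a d^nx+d_H\si$, while Euler's identity gives $\bL_E L=\op\sum_k kL_k$; hence $\dl L=0$ forces $kL_k=d_H\si_k$ for each $k\geq 1$ (a finite sum, since $L$ has finite jet order and bounded odd degree), i.e.\ $L-L_0$ is $d_H$-exact. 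With that substitution your argument, including the parity observation that an odd $L$ cannot carry the even class $h_0\vf$, is complete. What your approach buys is brevity and a reduction to the purely even theorem; what the cited approach buys is the cohomology of the full bicomplex rather than only the $(0,n)$ variational group.
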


\begin{prop} \mar{811'} \label{811'} The form $dL-\dl L$ is $d_H$-exact for any graded
density $L\in \cS^{0,n}_\infty[F;Y]$.
\end{prop}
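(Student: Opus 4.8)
The plan is to use the bicomplex decomposition of the exterior differential together with a graded integration by parts (the first variation formula). First I would split $d=d_H+d_V$ on $\cS^*_\infty[F;Y]$, where $d_V$ raises the contact degree by one and $d_H$ is the total differential introduced above. Since $L\in\cS^{0,n}_\infty[F;Y]$ sits in top horizontal degree $n=\di X$, one has $d_H L\in\cS^{0,n+1}_\infty[F;Y]=0$, so that $dL=d_V L$. It therefore suffices to show that $d_V L-\dl L$ is $d_H$-exact, i.e. lies in the image of $d_H$ from $\cS^{1,n-1}_\infty[F;Y]$.

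Next I would compute $d_V L$ explicitly. Writing $L=\cL\,d^nx$ and using $d_V\cL=\op\sum_{0\leq|\La|}(\dr^\La_A\cL)\,\th^A_\La$ with contact generators $\th^A_\La=ds^A_\La-s^A_{\la+\La}dx^\la$, I obtain
\[
d_V L=\op\sum_{0\leq|\La|}(\dr^\La_A\cL)\,\th^A_\La\w d^nx .
\]
The term with $|\La|=0$ already coincides with the $|\La|=0$ contribution to $\dl L$, so the entire discrepancy comes from the terms with $|\La|\geq 1$. For these I would pull each total derivative off the contact factor by means of the relation $d_H\th^A_\La=dx^\m\w\th^A_{\m+\La}$ (which follows directly from $d\th^A_\La=dx^\m\w ds^A_{\m+\La}$ and the symmetry of the multi-index). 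Iterating this identity together with the Leibniz rule for $d_H$ yields, for every $\La$ with $|\La|\geq 1$,
\[
(\dr^\La_A\cL)\,\th^A_\La\w d^nx=(-1)^{|\La|}\,\th^A\w d_\La(\dr^\La_A\cL)\,d^nx+d_H\si_\La
\]
for a suitable $\si_\La\in\cS^{1,n-1}_\infty[F;Y]$. Summing over $\La$, the reduced terms assemble into $\th^A\w\cE_A\,d^nx=\dl L$ by the definition of the variational derivatives $\cE_A$, while the boundary terms assemble into $d_H\si$ with $\si=\sum_\La\si_\La$. Hence $dL-\dl L=d_V L-\dl L=d_H\si$ is $d_H$-exact, as claimed.

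The routine but delicate part is the graded integration by parts itself. One must track the Grassmann--Koszul signs $(-1)^{[A][\cL]}$ that arise when the coefficients of parity $[\cL]$ are commuted past the contact one-forms $\th^A_\La$, and the signs produced when the odd operator $d_H$ is moved through these one-forms; only after this bookkeeping does the remainder reduce to the single factor $\th^A$ paired with $\cE_A$. One must also verify that the accumulated boundary pieces genuinely lie in $\im d_H$ rather than merely being $d_H$-closed, and this is precisely where the top-degree vanishing $d_H L=0$ and the structure of $\th^A_\La$ as $d_H$-derived from $\th^A$ enter. Away from these sign considerations the argument is the standard first variation formula transported to the Grassmann-graded variational bicomplex, and the result is consistent with Proposition \ref{811} in the variationally trivial case.
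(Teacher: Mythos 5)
The paper offers no proof of this proposition at all: it is quoted as a known cohomological fact about the Grassmann-graded variational bicomplex, with the references doing the work. Your argument is correct and is exactly the standard derivation found there: split $d=d_H+d_V$, kill $d_HL$ by the top horizontal degree, and reduce $d_VL=\op\sum_{0\leq|\La|}(\dr^\La_A\cL)\,\th^A_\La\w d^nx$ to $\th^A\w\cE_A d^nx$ by iterating the one-step identity $f\,\th^A_{\m+\Xi}\w d^nx=-d_\m f\,\th^A_\Xi\w d^nx-d_H(f\,\th^A_\Xi\w\om_\m)$ with $\om_\m=\dr_\m\rfloor d^nx$, which rests on $d_H\th^A_\La=dx^\m\w\th^A_{\m+\La}$ and $dx^\nu\w\om_\m=\dl^\nu_\m d^nx$. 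One small remark: your closing worry that the boundary pieces might be only $d_H$-closed rather than $d_H$-exact is moot, since each $\si_\La$ is produced as an explicit $d_H$-image in the integration-by-parts step; the genuinely delicate point is only the Koszul sign bookkeeping you already flag.
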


In order to treat symmetries of Lagrangian field theory
$(\cS^*_\infty[F;Y],L)$ in a very general setting, we consider
graded derivations of the $\Bbb R$-ring $\cS^0_\infty[F;Y]$.$^2$
They take the form
\mar{809}\beq
 \vt=\vt^\la\dr_\la + \op\sum_{0\leq|\La|} \vt_\La^A\dr^\La_A,
 \qquad \dr^\La_A(s_\Si^B)=\dr^\La_A\rfloor
ds_\Si^B=\dl_A^B\dl^\La_\Si. \label{809}
\eeq
Any graded derivation $\vt$ (\ref{809}) yields the Lie derivative
\be
\bL_\vt\f=\vt\rfloor d\f+ d(\vt\rfloor\f)
\ee
of the DGA $\cS^*_\infty[F;Y]$ which obeys the relations
\be
\bL_\vt\f=\vt\rfloor d\f+ d(\vt\rfloor\f), \qquad
\bL_\vt(\f\w\f')=\bL_\vt(\f)\w\f'
+(-1)^{[\vt][\f]}\f\w\bL_\vt(\f').
\ee
A graded derivation $\vt$ (\ref{809}) is called contact if the Lie
derivative $\bL_\vt$ preserves the ideal of contact graded forms
of the DGA $\cS^*_\infty[F;Y]$. Any contact graded derivation
admits the decomposition
\mar{810}\beq
\vt=\vt_H+\vt_V=\vt^\la d_\la + (\vt^A\dr_A +\op\sum_{0<|\La|}
d_\La\vt^A\dr_A^\La) \label{810}
\eeq
into the horizontal and vertical parts $\vt_H$ and $\vt_V$.

Given a graded density $L$ (\ref{0709}), a contact graded
derivation $\vt$ (\ref{810}) is said to be its variational
symmetry if the Lie derivative $\bL_\vt L$ of $L$ is a
variationally trivial graded density. If $\bL_\vt L=0$, a
variational symmetry of $L$ is called its exact symmetry.

\begin{prop} \mar{812} \label{812}
A contact graded derivation $\vt$ (\ref{810}) is a variational
symmetry iff its vertical part $\vt_V$  is also.$^2$
\end{prop}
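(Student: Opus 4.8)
The plan is to exploit the additivity of the Lie derivative in its generating derivation together with the fact that variational triviality is precisely $\dl$-closedness, so that only the $\dl$-image of $\bL_\vt L$ matters. Writing $\bL_\vt L=\bL_{\vt_H}L+\bL_{\vt_V}L$ according to the decomposition (\ref{810}), and using that $\dl$ is $\Bbb R$-linear, the asserted equivalence will follow at once provided I show that the horizontal contribution $\bL_{\vt_H}L$ is always variationally trivial. Indeed, if $\dl(\bL_{\vt_H}L)=0$, then $\dl(\bL_\vt L)=\dl(\bL_{\vt_V}L)$, so $\bL_\vt L$ is variationally trivial if and only if $\bL_{\vt_V}L$ is.

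First I would record that $\vt_H=\vt^\la d_\la$ is horizontal in the bicomplex sense: contracting a contact generator gives $\vt_H\rfloor\th^A_\La=\vt^\m s^A_{\m+\La}-\vt^\la s^A_{\la+\La}=0$. I then apply the homotopy relation $\bL_{\vt_H}L=\vt_H\rfloor dL+d(\vt_H\rfloor L)$ and split $d=d_H+d_V$ along the bigrading. Because $L\in\cS^{0,n}_\infty[F;Y]$ has maximal horizontal degree $n$, one has $d_H L=0$; a short computation (using $dx^\la\w(\dr_\m\rfloor d^nx)=\dl^\la_\m\,d^nx$ and the closedness of $d^nx$) then shows that the purely horizontal, i.e.\ $(0,n)$, component of $\bL_{\vt_H}L$ equals $d_H(\vt_H\rfloor L)=d_\m(\vt^\m\cL)\,d^nx$, while all remaining terms are contact forms of bidegree $(1,n-1)$. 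Hence, as a graded density, $\bL_{\vt_H}L=d_H(\vt^\la\cL\,\dr_\la\rfloor d^nx)$ is $d_H$-exact.

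It then remains to note that a $d_H$-exact density is variationally trivial, since $\dl\circ d_H=0$ on the variational subcomplex (equivalently, by Proposition \ref{811} such a density is of the stated form with $\vf=0$). This gives $\dl(\bL_{\vt_H}L)=0$, and together with the first paragraph it yields the claim. The one genuinely delicate point, which I expect to be the main obstacle, is the bookkeeping in the second step: verifying the graded Cartan homotopy identity on the Grassmann-graded bicomplex with the correct signs, and confirming that the leftover $(1,n-1)$ contact terms do not contribute to the density, so that $\bL_{\vt_H}L$ is honestly $d_H$-exact as a graded density rather than only up to contact corrections.
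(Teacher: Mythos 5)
The paper itself gives no proof of Proposition \ref{812}: it is quoted from Ref.~2, where it is a consequence of the first variational formula for a general contact graded derivation, and your argument reproduces essentially that computation. The splitting $\bL_\vt L=\bL_{\vt_H}L+\bL_{\vt_V}L$ together with $\vt_H\rfloor\th^A_\La=0$ does give $\bL_{\vt_H}L=d_H(\vt^\la\cL\,\dr_\la\rfloor d^nx)+\cL\,d_V\vt^\la\w(\dr_\la\rfloor d^nx)$, whose $(0,n)$-component is $d_H$-exact and hence $\dl$-closed, so $\dl(\bL_\vt L)=\dl(\bL_{\vt_V}L)$ and the equivalence follows. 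The delicate point you flag is real but harmless: the leftover $(1,n-1)$ term $\cL\,d_V\vt^\la\w(\dr_\la\rfloor d^nx)$ is precisely the contact remainder that appears in the first variational formula of Ref.~2, and it is discarded by the convention, implicit in the paper's definition, that $\bL_\vt L$ is read as a graded density, i.e., through its horizontal projection.
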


Therefore, we further restrict our consideration to vertical
contact graded derivations $\vt$. Such a derivation is the jet
prolongation
\mar{0672}\beq
\vt=\up^A\dr_A + \op\sum_{0<|\La|} d_\La\up^A\dr_A^\La
\label{0672}
\eeq
of its restriction $\up=\up^A\dr_A$ to the ring
$\cS^0_\infty[F;Y]$. Therefore, the relations
\mar{s2,3}\ben
&& \vt\rfloor d_H\f=-d_H(\vt\rfloor\f), \qquad
\f\in\cS^*_\infty[F;Y], \label{s2}\\
&& \bL_\vt(d_H\f)=d_H(\bL_\vt\f) \label{s3}
\een
hold. By virtue of the relation (\ref{s2}) and Proposition
\ref{811'}, the Lie derivative $\bL_{\vt}L$ of any graded density
$L$ admits the decomposition
\mar{xx10}\beq
\bL_{\vt}L= \vt\rfloor dL=\vt\rfloor\dl L + \vt\rfloor (dL-\dl L)
=\up\rfloor\dl L +d_H\si=\up^A\cE_A d^nx+d_H\si, \label{xx10}
\eeq
called the first variational formula. A glance at the expression
(\ref{xx10}) shows that $\vt$ (\ref{0672}) is a variational
symmetry of $L$ iff the graded density $\up\rfloor\dl L$ is
variationally trivial.

By virtue of the relation (\ref{s3}), any graded derivation
(\ref{0672}) is a variational symmetry of a variationally trivial
graded density. It follows that variational symmetries of a graded
density constitute a real Lie algebra $\cG_L$.

A graded derivation $\vt$ (\ref{0672}) is called nilpotent if
$\bL_\vt(\bL_\vt\f)=0$ for any horizontal form $\f\in
\cS^{0,*}_\infty[F;Y]$. One can show that $\vt$ (\ref{0672}) is
nilpotent only if it is odd and iff $\vt(\up)=0$.$^2$

For the sake of simplicity, the common symbol $\up$ further stands
for the graded derivation $\vt$ (\ref{0672}), its first term
$\up$, and the Lie derivative $\bL_\vt$. We agree to call $\up$
the graded derivation of the DGA $\cS^*_\infty[F;Y]$. Its right
graded derivations $\op\up^\lto ={\op\dr^\lto}_A\up^A$ are also
considered.

\bigskip
\bigskip

\noindent {\bf III. NOETHER IDENTITIES}
\bigskip

To describe reducible NI of Lagrangian field theory
$(\cS^*_\infty[F;Y],L)$,$^{8,11}$ let us introduce the following
notation. Given a vector bundle $E\to X$, we call
\be
\ol E=E^*\ot\op\w^n T^*X
\ee
the density-dual of $E$. The density-dual of a graded vector
bundle $E=E^0\oplus E^1$ is $\ol E=\ol E^1\oplus \ol E^0$. Given a
graded vector bundle $E=E^0\oplus E^1$ over $Y$, we consider the
composite bundle $E\to E^0\to X$ and denote
$\cP^*_\infty[E;Y]=\cS^*_\infty[E;E^0]$. Let $VF$ be the vertical
tangent bundle of $F\to X$, the density-dual of the vector bundle
$VF\to F$ is
\be
\ol{VF}=V^*F\op\ot_F\op\w^n T^*X.
\ee

Let us enlarge $\cS^*_\infty[F;Y]$ to the DGA
$\cP^*_\infty[\ol{VF};Y]$ with the local basis $(s^A, \ol s_A)$,
$[\ol s_A]=([A]+1){\rm mod}\,2$. Its elements $\ol s_A$ are called
antifields of antifield number Ant$[\ol s_A]= 1$. The DGA
$\cP^*_\infty[\ol{VF};Y]$ is endowed with the nilpotent right
graded derivation $\ol\dl=\rdr^A \cE_A$, where $\cE_A$ are the
variational derivatives (\ref{0709}). Then we have the chain
complex
\mar{v042}\beq
0\lto \im\ol\dl \llr^{\ol\dl} \cP^{0,n}_\infty[\ol{VF};Y]_1
\llr^{\ol\dl} \cP^{0,n}_\infty[\ol{VF};Y]_2 \label{v042}
\eeq
of graded densities of antifield number $\leq 2$. Its one-cycles
\mar{0712}\beq
\ol\dl \Phi=0, \qquad \Phi= \op\sum_{0\leq|\La|} \Phi^{A,\La}\ol
s_{\La A} d^nx \in \cP^{0,n}_\infty[\ol{VF};Y]_1,\label{0712}
\eeq
define NI of Lagrangian field theory $(\cS^*_\infty[F;Y],L)$. In
particular, one-chains $\Phi \in \cP^{0,n}_\infty[\ol{VF};Y]_1$
are necessarily NI if they are boundaries. Therefore, these NI are
called trivial. Accordingly, non-trivial NI modulo the trivial
ones are associated to elements of the first homology
$H_1(\ol\dl)$ of the complex (\ref{v042}).

Non-trivial NI obey first-stage NI. To describe them, let us
assume that the module $H_1(\ol \dl)$ is finitely generated.
Namely, there exists a projective $C^\infty(X)$-module
$\cC_{(0)}\subset H_1(\ol \dl)$ of finite rank possessing the
local basis $\{\Delta_r\}$ such that any element $\Phi\in H_1(\ol
\dl)$ factorizes as
\mar{xx2}\beq
\Phi= \op\sum_{0\leq|\Xi|} G^{r,\Xi} d_\Xi \Delta_r d^nx, \qquad
\Delta_r=\op\sum_{0\leq|\La|} \Delta_r^{A,\La}\ol s_{\La A},\qquad
G^{r,\Xi},\Delta_r^{A,\La}\in \cS^0_\infty[F;Y], \label{xx2}
\eeq
through elements of $\cC_{(0)}$. Thus, all non-trivial NI
(\ref{0712}) result from the NI
\mar{v64}\beq
\ol\dl\Delta_r= \op\sum_{0\leq|\La|} \Delta_r^{A,\La} d_\La
\cE_A=0, \label{v64}
\eeq
called the complete NI. By virtue of the Serre--Swan
theorem,$^{11}$ the module $\cC_{(0)}$ is isomorphic to the
$C^\infty(X)$-module of sections of the density-dual $\ol E_0$ of
some graded vector bundle $E_0\to X$. Let us enlarge
$\cP^*_\infty[\ol{VF};Y]$ to the DGA
\be
\ol\cP^*_\infty\{0\}=\cP^*_\infty[\ol{VF}\oplus_Y \ol E_0;Y]
\ee
possessing the local basis $(s^A,\ol s_A, \ol c_r)$ of Grassmann
parity $[\ol c_r]=([\Delta_r]+1){\rm mod}\,2$ and of antifield
number ${\rm Ant}[\ol c_r]=2$. This DGA is provided with the odd
right graded derivation $\dl_0=\ol\dl + \rdr^r\Delta_r$ which is
nilpotent iff the NI (\ref{v64}) hold. Then we have the chain
complex
\mar{v66}\beq
0\lto \im\ol\dl \op\lto^{\ol\dl}
\cP^{0,n}_\infty[\ol{VF};Y]_1\op\lto^{\dl_0}
\ol\cP^{0,n}_\infty\{0\}_2 \op\lto^{\dl_0}
\ol\cP^{0,n}_\infty\{0\}_3 \label{v66}
\eeq
of graded densities of antifield number $\leq 3$. It possesses
trivial homology $H_0(\dl_0)$ and $H_1(\dl_0)$. Its two-cycles
define the first-stage NI
\mar{v79}\ben
&& \dl_0 \Phi=0, \qquad \Phi= G + H= \op\sum_{0\leq|\La|} G^{r,\La}\ol c_{\La r}d^nx +
\op\sum_{0\leq|\La|,|\Si|} H^{(A,\La)(B,\Si)}\ol s_{\La A}\ol
s_{\Si B}d^nx,  \nonumber\\
&& \op\sum_{0\leq|\La|} G^{r,\La}d_\La\Delta_r d^nx =-\ol\dl H.
\label{v79}
\een
However, the converse need not be true. One can show that NI
(\ref{v79}) are cycles iff any $\ol\dl$-cycle $\Phi\in
\cP^{0,n}_\infty[\ol{VF};Y]_2$ is a $\dl_0$-boundary.$^{11}$ Any
boundary $\Phi\in \ol\cP^{0,n}_\infty\{0\}_2$ necessarily defines
first-stage NI (\ref{v79}), called trivial. Accordingly,
non-trivial first-stage NI modulo the trivial ones are identified
with elements of the second homology $H_2(\dl_0)$ of the complex
(\ref{v66}).

Non-trivial first-stage NI obey second-stage NI, and so on.
Iterating the arguments, one can characterize $N$-stage reducible
Lagrangian field theory $(\cS^*_\infty[F;Y],L)$ as follows. There
are graded vector bundles $E_0,\ldots, E_N$ over $X$, and the DGA
$\cP^*_\infty[\ol{VF};Y]$ is enlarged to the DGA
\mar{v91}\beq
\ol\cP^*_\infty\{N\}=\cP^*_\infty[\ol{VF}\op\oplus_Y \ol
E_0\op\oplus_Y\cdots\op\oplus_Y \ol E_N;Y] \label{v91}
\eeq
with the local basis $(s^A,\ol s_A, \ol c_r, \ol c_{r_1}, \ldots,
\ol c_{r_N})$ of antifield number Ant$[\ol c_{r_k}]=k+2$. The DGA
(\ref{v91}) is provided with the nilpotent right graded derivation
\mar{v92,'}\ben
&&\dl_{KT}=\rdr^A\cE_A +
\op\sum_{0\leq|\La|}\rdr^r\Delta_r^{A,\La}\ol s_{\La A} +
\op\sum_{1\leq k\leq N}\rdr^{r_k} \Delta_{r_k},
\label{v92}\\
&& \Delta_{r_k}= \op\sum_{0\leq|\La|}
\Delta_{r_k}^{r_{k-1},\La}\ol c_{\La r_{k-1}} + \op\sum_{0\leq
|\Si|, |\Xi|}(h_{r_k}^{(r_{k-2},\Si)(A,\Xi)}\ol c_{\Si r_{k-2}}\ol
s_{\Xi A}+...), \label{v92'}
\een
of antifield number -1, where the index $k=-1$ stands for $\ol
s_A$. It is called the KT operator. With this graded derivation,
the module $\ol\cP^{0,n}_\infty\{N\}_{\leq N+3}$ of densities of
antifield number $\leq (N+3)$ is decomposed into the exact KT
chain complex
\mar{v94}\ben
&& 0\lto \im \ol\dl \llr^{\ol\dl}
\cP^{0,n}_\infty[\ol{VF};Y]_1\llr^{\dl_0}
\ol\cP^{0,n}_\infty\{0\}_2\llr^{\dl_1}
\ol\cP^{0,n}_\infty\{1\}_3\cdots
\label{v94}\\
&& \qquad
 \llr^{\dl_{N-1}} \ol\cP^{0,n}_\infty\{N-1\}_{N+1}
\llr^{\dl_{KT}} \ol\cP^{0,n}_\infty\{N\}_{N+2}\llr^{\dl_{KT}}
\ol\cP^{0,n}_\infty\{N\}_{N+3} \nonumber
\een
which satisfies the following homology condition.

\begin{cond} \label{v155} \mar{v155} Any $\dl_{k<N}$-cycle
$\f\in \ol\cP_\infty^{0,n}\{k\}_{k+3}\subset
\ol\cP_\infty^{0,n}\{k+1\}_{k+3}$ is a $\dl_{k+1}$-boundary.
\end{cond}

Given the KT complex (\ref{v94}), the nilpotentness $\dl_{KT}^2=0$
of its boundary operator (\ref{v92}) is equivalent to the complete
non-trivial NI (\ref{v64}) and the complete non-trivial $(1\leq
k\leq N)$-stage NI
\mar{v93}\beq
\op\sum_{0\leq|\La|} \Delta_{r_k}^{r_{k-1},\La}d_\La
(\op\sum_{0\leq|\Si|} \Delta_{r_{k-1}}^{r_{k-2},\Si}\ol c_{\Si
r_{k-2}}) = -  \ol\dl(\op\sum_{0\leq |\Si|,
|\Xi|}h_{r_k}^{(r_{k-2},\Si)(A,\Xi)}\ol c_{\Si r_{k-2}}\ol s_{\Xi
A}). \label{v93}
\eeq

\bigskip
\bigskip

\noindent {\bf IV. GAUGE SYMMETRIES}
\bigskip

We define non-trivial gauge and higher-stage gauge symmetries of
Lagrangian field theory $(\cS^*_\infty[F;Y],L)$ as those
associated to the NI (\ref{v64}) and higher-stage NI (\ref{v93})
by means of the inverse second Noether theorem.

Let us start with the following notation. Given the DGA
$\ol\cP^*_\infty\{N\}$ (\ref{v91}), we consider the DGA
\mar{w5}\beq
\cP^*_\infty\{N\}=\cP^*_\infty[F\op\oplus_Y E_0\op\oplus_Y\cdots
\op\oplus_Y E_N;Y], \label{w5}
\eeq
possessing the local basis $(s^A, c^r, c^{r_1}, \ldots, c^{r_N})$,
$[c^{r_k}]=([\ol c_{r_k}]+1){\rm mod}\,2$, and the DGA
\mar{w6}\beq
P^*_\infty\{N\}=\cP^*_\infty[\ol{VF}\op\oplus_Y E_0\oplus\cdots
\op\oplus_Y E_N \op\oplus_Y \ol E_0\op\oplus_Y\cdots\op\oplus_Y
\ol E_N;Y]
 \label{w6}
\eeq
with the local basis $(s^A, \ol s^A, c^r, c^{r_1}, \ldots,
c^{r_N},\ol c_r, \ol c_{r_1}, \ldots, \ol c_{r_N})$. Their
elements $c^{r_k}$ are called $k$-stage ghosts of ghost number
gh$[c^{r_k}]=k+1$ and antifield number ${\rm
Ant}[c^{r_k}]=-(k+1)$. The $C^\infty(X)$-module $\cC^{(k)}$ of
$k$-stage ghosts is the density dual of the module $\cC_{(k+1})$
of $(k+1)$-stage antifields. The DGAs $\ol\cP^*_\infty\{N\}$
(\ref{v91}) and $\cP^*_\infty\{N\}$ (\ref{w5}) are subalgebras of
$P^*_\infty\{N\}$ (\ref{w6}). The KT operator $\dl_{KT}$
(\ref{v92}) is naturally extended to a graded derivation of the
DGA $P^*_\infty\{N\}$.

We refer to the following formulas in the sequel.$^3$ Any graded
form $\f\in \cS^*_\infty[F;Y]$ and any finite tuple $(f^\La)$,
$0\leq |\La|$, of local graded functions $f^\La\in
\cS^0_\infty[F;Y]$ obey the relations
\mar{qq1}\ben
&& \op\sum_{0\leq |\La|\leq k} f^\La d_\La \f\w d^nx= \op\sum_{0\leq
|\La|}(-1)^{|\La|}d_\La (f^\La)\f\w d^nx +d_H\si,
\label{qq1a}\\
&& \op\sum_{0\leq |\La|\leq k} (-1)^{|\La|}d_\La(f^\La \f)=
\op\sum_{0\leq |\La|\leq k} \eta (f)^\La d_\La \f, \label{qq1b}\\
&& \eta (f)^\La = \op\sum_{0\leq|\Si|\leq k-|\La|}(-1)^{|\Si+\La|}
\frac{(|\Si+\La|)!}{|\Si|!|\La|!} d_\Si f^{\Si+\La},
\label{qq1c}\\
&& \eta(\eta(f))^\La=f^\La. \label{qq1d}
\een

\begin{theo} \label{w35} \mar{w35} Given the KT complex (\ref{v94}),
the module of graded densities $\cP_\infty^{0,n}\{N\}$ is
decomposed into the cochain sequence
\mar{w108,'}\ben
&& 0\to \cS^{0,n}_\infty[F;Y]\ar^{\bu}
\cP^{0,n}_\infty\{N\}^1\ar^{\bu}
\cP^{0,n}_\infty\{N\}^2\ar^{\bu}\cdots, \label{w108}\\
&& \bu=u + u^{(1)}+\cdots + u^{(N)}=u^A\frac{\dr}{\dr s^A} +
u^r\frac{\dr}{\dr c^r} +\cdots + u^{r_{N-1}}\frac{\dr}{\dr
c^{r_{N-1}}}, \label{w108'}
\een
graded in ghost number. Its ascent operator $\bu$ (\ref{w108'}) is
an odd graded derivation of ghost number 1 where $u$ (\ref{w33})
is a variational symmetry of a Lagrangian $L$ and the graded
derivations $u_{(k)}$ (\ref{w38}), $k=1,\ldots, N$, obey the
relations (\ref{w34}).
\end{theo}

\begin{proof} Given the KT operator (\ref{v92}), let us extend an original
Lagrangian $L$ to the Lagrangian
\mar{w8}\beq
L_e=L+L_1=L + \op\sum_{0\leq k\leq N} c^{r_k}\Delta_{r_k}d^nx=L
+\dl_{KT}( \op\sum_{0\leq k\leq N} c^{r_k}\ol c_{r_k}d^nx)
\label{w8}
\eeq
of zero antifield number. It is readily observed that the KT
operator $\dl_{KT}$ is a variational symmetry of $L_e$. Since
$\dl_{KT}$ is odd, it follows from the first variational formula
(\ref{xx10}) and Proposition \ref{811} that
\mar{w16}\ben
&& [\frac{\op\dl^\lto \cL_e}{\dl \ol s_A}\cE_A
+\op\sum_{0\leq k\leq N} \frac{\op\dl^\lto \cL_e}{\dl \ol
c_{r_k}}\Delta_{r_k}]d^nx = [\up^A\cE_A + \op\sum_{0\leq k\leq
N}\up^{r_k}\frac{\dl
\cL_e}{\dl c^{r_k}}]d^nx= d_H\si,  \label{w16}\\
&& \up^A= \frac{\op\dl^\lto \cL_e}{\dl \ol s_A}=u^A+w^A
=\op\sum_{0\leq|\La|} c^r_\La\eta(\Delta^A_r)^\La +
 \op\sum_{1\leq i\leq N}\op\sum_{0\leq|\La|}
c^{r_i}_\La\eta(\op\dr^\lto{}^A(h_{r_i}))^\La, \nonumber\\
&& \up^{r_k}=\frac{\op\dl^\lto \cL_e}{\dl \ol c_{r_k}} =u^{r_k}+
w^{r_k}= \op\sum_{0\leq|\La|}
c^{r_{k+1}}_\La\eta(\Delta^{r_k}_{r_{k+1}})^\La +
\op\sum_{k+1<i\leq N} \op\sum_{0\leq|\La|}
c^{r_i}_\La\eta(\op\dr^\lto{}^{r_k}(h_{r_i}))^\La. \nonumber
\een
The equality (\ref{w16}) falls into the set of equalities
\mar{w19,20}\ben
&& \frac{\op\dl^\lto (c^r\Delta_r)}{\dl \ol s_A}\cE_A d^nx
=u^A\cE_A d^nx=d_H\si_0, \label{w19}\\
&&  [\frac{\op\dl^\lto (c^{r_k}\Delta_{r_k})}{\dl \ol s_A}\cE_A
+\op\sum_{0\leq i<k} \frac{\op\dl^\lto (c^{r_k}\Delta_{r_k})}{\dl
\ol c_{r_i}}\Delta_{r_i}] d^nx= d_H\si_k, \qquad k=1,\ldots,N.
\label{w20}
\een
A glance at the equality (\ref{w19}) shows that, by virtue of the
first variational formula (\ref{xx10}), the odd graded derivation
\mar{w33}\beq
u= u^A\frac{\dr}{\dr s^A}, \qquad u^A =\op\sum_{0\leq|\La|}
c^r_\La\eta(\Delta^A_r)^\La, \label{w33}
\eeq
of $\cP^0\{0\}$ is a variational symmetry of a Lagrangian $L$.
Every equality (\ref{w20}) falls into a set of equalities graded
by the polynomial degree in antifields. Let us consider that of
them linear in antifields $\ol c_{r_{k-2}}$. We have
\be
&& [\frac{\op\dl^\lto}{\dl \ol
s_A}(c^{r_k}\op\sum_{0\leq|\Si|,|\Xi|}h_{r_k}^{(r_{k-2},\Si)(A,\Xi)}
\ol
c_{\Si r_{k-2}}\ol s_{\Xi A})\cE_A + \\
&& \qquad \frac{\op\dl^\lto}{\dl \ol
c_{r_{k-1}}}(c^{r_k}\op\sum_{0\leq|\Si|}\Delta_{r_k}^{r'_{k-1},\Si}\ol
c_{\Si r'_{k-1}})\op\sum_{0\leq|\Xi|}
\Delta_{r_{k-1}}^{r_{k-2},\Xi}\ol c_{\Xi r_{k-2}}]d^nx= d_H\si_k.
\ee
This equality is brought into the form
\be
 [\op\sum_{0\leq|\Xi|}
(-1)^{|\Xi|}d_\Xi(c^{r_k}\op\sum_{0\leq|\Si|}
h_{r_k}^{(r_{k-2},\Si)(A,\Xi)} \ol c_{\Si r_{k-2}})\cE_A +
u^{r_{k-1}}\op\sum_{0\leq|\Xi|} \Delta_{r_{k-1}}^{r_{k-2},\Xi}\ol
c_{\Xi r_{k-2}}] d^nx= d_H\si_k.
\ee
Using the relation (\ref{qq1a}), we obtain the equality
\be
[\op\sum_{0\leq|\Xi|} c^{r_k}\op\sum_{0\leq|\Si|}
h_{r_k}^{(r_{k-2},\Si)(A,\Xi)} \ol c_{\Si r_{k-2}} d_\Xi\cE_A +
u^{r_{k-1}}\op\sum_{0\leq|\Xi|} \Delta_{r_{k-1}}^{r_{k-2},\Xi}\ol
c_{\Xi r_{k-2}}]d^nx= d_H\si'_k.
\ee
The variational derivative of both its sides with respect to $\ol
c_{r_{k-2}}$ leads to the relation
\mar{w34}\ben
&&\op\sum_{0\leq|\Si|} d_\Si u^{r_{k-1}}\frac{\dr}{\dr
c^{r_{k-1}}_\Si} u^{r_{k-2}} =\ol\dl(\al^{r_{k-2}}),\label{w34}\\
&& \al^{r_{k-2}} = -\op\sum_{0\leq|\Si|}
\eta(h_{r_k}^{(r_{k-2})(A,\Xi)})^\Si d_\Si(c^{r_k} \ol s_{\Xi A}),
\nonumber
\een
which the odd graded derivation
\mar{w38}\beq
u^{(k)}= u^{r_{k-1}}\frac{\dr}{\dr c^{r_{k-1}}}, \qquad
u^{r_{k-1}}=\op\sum_{0\leq|\La|}
c^{r_k}_\La\eta(\Delta^{r_{k-1}}_{r_k})^\La, \qquad k=1,\ldots,N,
\label{w38}
\eeq
satisfies. Graded derivations $u$ (\ref{w33}) and $u^{(k)}$
(\ref{w38}) are assembled into the ascent operator $\bu$
(\ref{w108'}) of the cochain sequence (\ref{w108}).
\end{proof}

A glance at the expression (\ref{w33}) shows that the variational
symmetry $u$ is a linear differential operator on the
$C^\infty(X)$-module $\cC^{(0)}$ of ghosts with values in the real
space $\cG_L$ of variational symmetries. Following Definition
\ref{s7} extended to Lagrangian theories of odd fields, we call
$u$ (\ref{w33}) the gauge symmetry of a Lagrangian $L$ which is
associated to the NI (\ref{v64}). This association is unique due
to the following.

\begin{prop} \mar{825} \label{825}
The variational derivative of the equality (\ref{w19}) with
respect to ghosts $c^r$ leads to the equality
\be
\dl_r(u^A\cE_A
d^nx)=\op\sum_{0\leq|\La|}(-1)^{|\La|}d_\La(\eta(\Delta^A_r)^\La\cE_A)=
\op\sum_{0\leq|\La|}(-1)^{|\La|} \eta(\eta(\Delta^A_r))^\La
d_\La\cE_A=0,
\ee
which reproduces the complete non-trivial NI (\ref{v64}) by means
of the relation (\ref{qq1d}).
\end{prop}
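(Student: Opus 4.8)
The plan is to apply the variational derivative $\dl_r$ with respect to the ghost $c^r$ directly to both sides of the equality (\ref{w19}), $u^A\cE_A d^nx=d_H\si_0$, and to read off the resulting identity. The right-hand side is $d_H$-exact, and the variational operator annihilates total divergences (the relation $\dl\circ d_H=0$ built into the variational bicomplex, cf.\ Proposition \ref{811}); hence $\dl_r(d_H\si_0)=0$, which accounts for the vanishing asserted at the end of the statement. Thus the whole content reduces to computing $\dl_r(u^A\cE_A d^nx)$ and recognizing it as the complete NI $\ol\dl\Delta_r$.

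For that computation I would first note that in $u^A=\op\sum_{0\le|\La|}c^r_\La\eta(\Delta^A_r)^\La$ both the coefficients $\eta(\Delta^A_r)^\La$ and the variational derivatives $\cE_A$ belong to $\cS^0_\infty[F;Y]$ and therefore carry no dependence on the ghosts or their jets; the density $u^A\cE_A$ is genuinely linear in the ghost jets $c^r_\La$. Applying the definition of the variational derivative with respect to $c^r$ then strips off the ghost and transfers each jet index into a total derivative, giving
\[
\dl_r(u^A\cE_A d^nx)=\op\sum_{0\le|\La|}(-1)^{|\La|}d_\La(\eta(\Delta^A_r)^\La\cE_A)\,d^nx,
\]
which is the first equality. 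Next I would use the integration-by-parts relation (\ref{qq1b}) with $f^\La=\eta(\Delta^A_r)^\La$ and $\f=\cE_A$ to shift the total derivatives off the coefficients and onto $\cE_A$, producing the factor $\eta(\eta(\Delta^A_r))^\La$ multiplying $d_\La\cE_A$. The involutivity (\ref{qq1d}), $\eta(\eta(\Delta^A_r))^\La=\Delta^{A,\La}_r$, then collapses the double $\eta$ and yields $\op\sum_{0\le|\La|}\Delta^{A,\La}_r d_\La\cE_A$, i.e.\ exactly $\ol\dl\Delta_r$, the complete NI (\ref{v64}).

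The only genuinely structural point — and the one I expect to be the crux — is the double application of $\eta$ closing up through (\ref{qq1d}): the gauge-symmetry coefficients delivered by the inverse second Noether theorem (Theorem \ref{w35}) are $\eta(\Delta^A_r)^\La$, carrying a single $\eta$, and it is precisely the \emph{extra} $\eta$ generated by integrating by parts against $\cE_A$ that restores the original $\Delta^{A,\La}_r$, thereby making the recovery of the NI unique and justifying the remark preceding the statement. The remaining steps are bookkeeping: one must confirm the linearity of $u^A\cE_A$ in the ghost jets (so that $\dl_r$ acts as described) and keep careful track of the signs $(-1)^{|\La|}$ produced by the variational derivative and by the integration by parts; no deeper input is needed beyond the relations (\ref{qq1b}), (\ref{qq1d}) and the exactness property $\dl\circ d_H=0$.
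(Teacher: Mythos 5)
Your proposal is correct and follows essentially the same route as the paper, which presents this proposition as the displayed chain of equalities itself: apply $\dl_r$ to the ghost-linear density $u^A\cE_A d^nx$ (using $\dl\circ d_H=0$ on the right-hand side of (\ref{w19})), integrate by parts via (\ref{qq1b}), and collapse the double $\eta$ through (\ref{qq1d}) to recover $\sum_\La\Delta_r^{A,\La}d_\La\cE_A=0$. The only discrepancy is cosmetic: your third expression, $\sum_\La\eta(\eta(\Delta_r^A))^\La d_\La\cE_A$ without the extra $(-1)^{|\La|}$, is the one actually consistent with (\ref{qq1b}), so the additional sign factor in the paper's display appears to be a typo rather than a gap in your argument.
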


Moreover, the gauge symmetry $u$ (\ref{w33}) is complete in the
following sense. Let
\be
\op\sum_{0\leq|\Xi|} C^RG^{r,\Xi}_R d_\Xi \Delta_r d^nx
\ee
be some projective $C^\infty(X)$-module of finite rank of
non-trivial NI (\ref{xx2}) parameterized by the corresponding
ghosts $C^R$. We have the equalities
\be
&& 0=\op\sum_{0\leq|\Xi|} C^RG^{r,\Xi}_R d_\Xi
(\op\sum_{0\leq|\La|}\Delta_r^{A,\La}d_\La \cE_A) d^nx=\\
&& \qquad
\op\sum_{0\leq|\La|}(\op\sum_{0\leq|\Xi|}\eta(G^r_R)^\Xi C^R_\Xi)
\Delta_r^{A,\La}d_\La \cE_Ad^nx+d_H(\si)=\\
&& \qquad
\op\sum_{0\leq|\La|}(-1)^{|\La|}d_\La(\Delta_r^{A,\La}\op\sum_{0\leq|\Xi|}\eta(G^r_R)^\Xi
C^R_\Xi)\cE_A d^nx +d_H\si =\\
&& \qquad
\op\sum_{0\leq|\La|}\eta(\Delta_r^A)^\La
d_\La(\op\sum_{0\leq|\Xi|}\eta(G^r_R)^\Xi C^R_\Xi)\cE_A d^nx
+d_H\si=\\
&&\qquad \op\sum_{0\leq|\La|}u_r^{A,\La}d_\La(\op\sum_{0\leq|\Xi|}\eta(G^r_R)^\Xi
C^R_\Xi)\cE_A d^nx +d_H\si.
\ee
It follows that the graded derivation
\be
d_\La(\op\sum_{0\leq|\Xi|}\eta(G^r_R)^\Xi
C^R_\Xi)u_r^{A,\La}\frac{\dr}{\dr s^A}
\ee
is a variational symmetry of a Lagrangian $L$ and, consequently,
its gauge symmetry parameterized by ghosts $C^R$. It factorizes
through the gauge symmetry (\ref{w33}) by putting ghosts
\be
c^r= \op\sum_{0\leq|\Xi|}\eta(G^r_R)^\Xi C^R_\Xi.
\ee
Thus, we come to the following definition.

\begin{defi} \mar{s4} \label{s4}
The odd graded derivation $u$ (\ref{w33}) is said to be a complete
non-trivial gauge symmetry of Lagrangian field theory associated
to complete non-trivial NI (\ref{v64}).
\end{defi}

For instance, if a complete non-trivial gauge symmetry (\ref{w33})
is of second jet order in ghosts, i.e.,
\mar{0656}\beq
u=(c^r u_r^A +c^r_\m u^{A,\m}_r + c^r_{\nu\m} u_r^{A,\nu\m})\dr_A,
\label{0656}
\eeq
the corresponding NI (\ref{v64}) take the form
\mar{0657}\beq
u^A_r\cE_A - d_\m(u^{A,\m}_r\cE_A) + d_{\nu\m}(u_r^{A,\nu\m}
\cE_A)=0. \label{0657}
\eeq

Turn now to the relation (\ref{w34}). For $k=1$, it takes the form
\be
\op\sum_{0\leq|\Si|} d_\Si u^r\frac{\dr}{\dr c^r_\Si} u^A =\ol
\dl(\al^A)
\ee
of a first-stage gauge symmetry condition on-shell which the
non-trivial gauge symmetry $u$ (\ref{w33}) satisfies. Therefore,
one can treat the odd graded derivation
\be
u^{(1)}= u^r\frac{\dr}{\dr c^r}, \qquad u^r=\op\sum_{0\leq|\La|}
c^{r_1}_\La\eta(\Delta^r_{r_1})^\La,
\ee
as a first-stage gauge symmetry associated to the complete
non-trivial first-stage NI
\be
 \op\sum_{0\leq|\La|} \Delta_{r_1}^{r,\La}d_\La
(\op\sum_{0\leq|\Si|} \Delta_r^{A,\Si}\ol s_{\Si A}) = -
\ol\dl(\op\sum_{0\leq |\Si|, |\Xi|}h_{r_1}^{(B,\Si)(A,\Xi)}\ol
s_{\Si B}\ol s_{\Xi A}).
\ee

Iterating the arguments, one comes to the relation (\ref{w34})
which provides a $k$-stage gauge symmetry condition which is
associated to the complete non-trivial $k$-stage NI (\ref{v93}).

\begin{prop} \mar{826} \label{826}
Conversely, given the $k$-stage gauge symmetry condition
(\ref{w34}), the $k$-stage NI (\ref{v93}) are reproduced.$^3$
\end{prop}

Accordingly, we call the odd graded derivation $u_{(k)}$
(\ref{w38}) the $k$-stage gauge symmetry. It is complete as
follows.$^3$ Let
\be
\op\sum_{0\leq|\Xi|} C^{R_k}G^{r_k,\Xi}_{R_k} d_\Xi \Delta_{r_k}
d^nx
\ee
be a projective $C^\infty(X)$-module of finite rank of non-trivial
$k$-stage NI (\ref{xx2}) factorizing through the complete ones
(\ref{v92'}), and which are parameterized by the corresponding
ghosts $C^{R_k}$. One can show that it defines a $k$-stage gauge
symmetry factorizing through $u^{(k)}$ (\ref{w38}) by putting
$k$-stage ghosts
\be
c^{r_k}= \op\sum_{0\leq|\Xi|}\eta(G^{r_k}_{R_k})^\Xi C^{R_k}_\Xi.
\ee

\begin{defi} \mar{s6} \label{s6}
The odd graded derivation $u_{(k)}$ (\ref{w38}) is said to be a
complete non-trivial $k$-stage gauge symmetry of a Lagrangian $L$.
\end{defi}

In accordance with Definitions \ref{s4} and \ref{s6}, components
of the ascent operator $\bu$ (\ref{w108'}) are complete
non-trivial gauge and higher-stage gauge symmetries. Therefore, we
agree to call this operator the gauge operator. In these terms,
Theorem \ref{w35} is the inverse second Noether theorem. The
corresponding direct second Noether theorem is stated by
Propositions \ref{825} and \ref{826}.

\bigskip
\bigskip

\noindent {\bf V. ALGEBRA OF GAUGE SYMMETRIES}
\bigskip

In contrast with the KT operator (\ref{v92}), the gauge operator
$\bu$ (\ref{w108}) need not be nilpotent. Following the example of
Yang--Mills gauge theory, let us study its extension to a
nilpotent graded derivation
\mar{w109}\ben
&& \bb=\bu+ \g=\bu + \op\sum_{1\leq k\leq N+1}\g^{(k)}=
\bu + \op\sum_{1\leq k\leq N+1}\g^{r_{k-1}}\frac{\dr}{\dr
c^{r_{k-1}}}= \label{w109} \\
&& \qquad (u^A\frac{\dr}{\dr s^A}+ \g^r\frac{\dr}{\dr c^r}) +
\op\sum_{0\leq k\leq N-1} (u^{r_k}\frac{\dr}{\dr c^{r_k}}+
\g^{r_{k+1}}\frac{\dr}{\dr c^{r_{k+1}}}) \nonumber
\een
of ghost number 1 by means of antifield-free terms $\g^{(k)}$ of
higher polynomial degree in ghosts $c^{r_i}$ and their jets
$c^{r_i}_\La$, $0\leq i<k$. We call $\bb$ (\ref{w109}) the BRST
operator. The following necessary condition holds.

\begin{prop} \label{lmp6} \mar{lmp6} The gauge operator
(\ref{w108}) admits the nilpotent extension (\ref{w109}) only if
the gauge symmetry conditions (\ref{w34}) and the higher-stage NI
(\ref{v93}) are satisfied off-shell.
\end{prop}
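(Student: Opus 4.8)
The plan is to prove necessity by reducing the nilpotency $\bb^2=0$ to conditions on the generators and then isolating, inside each such condition, the single ghost sector that reproduces the off-shell relations. Since $\bb$ (\ref{w109}) is an odd graded derivation, $\bb^2$ is again a graded derivation (the graded commutator of $\bb$ with itself), so $\bb$ is nilpotent iff $\bb^2$ annihilates every generator, i.e. $\bb(u^A)=\bb^2(s^A)=0$ and $\bb^2(c^{r_{k-1}})=0$ for all $k$. I would then exploit the polynomial degree in the individual ghosts $c^{r_i}$ as a grading. Recall from (\ref{w109}) that $\bb(s^A)=u^A$ and $\bb(c^{r_{k-1}})=u^{r_{k-1}}+\g^{r_{k-1}}$, where $u^A,u^{r_{k-1}}$ (\ref{w33}), (\ref{w38}) are linear in the ghosts $c^r$, $c^{r_k}$ respectively, while by hypothesis each $\g^{r_{k-1}}$ is antifield-free and of polynomial degree $\geq 2$ in the ghosts $c^{r_i}$, $0\leq i<k$.

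The heart of the argument is a ghost-counting observation carried out for each $k=1,\dots,N$ on the generator $c^{r_{k-2}}$ (with the convention $c^{r_{-1}}=s^A$, $u^{r_{-1}}=u^A$). In $\bb^2(c^{r_{k-2}})=\bb(u^{r_{k-2}}+\g^{r_{k-2}})$ I single out the sector linear in the top ghost $c^{r_k}$. Since $c^{r_k}$ enters $\bb$ only through $u^{r_{k-1}}$, a factor $c^{r_k}$ can be produced only by applying $u^{(k)}=u^{r_{k-1}}\dr/\dr c^{r_{k-1}}$ to something carrying $c^{r_{k-1}}$. But in $\bb(c^{r_{k-2}})$ only the linear piece $u^{r_{k-2}}$ carries $c^{r_{k-1}}$ ($\g^{r_{k-2}}$ depends on $c^{r_i}$ with $i\leq k-2$ only), while $\g^{(k)}$ contributes higher ghost powers and $u^{(k+1)},\g^{(k+1)}$ find no $c^{r_k}$ to act on. Hence the whole $c^{r_k}$-linear sector of $\bb^2(c^{r_{k-2}})$ equals
$$u^{(k)}(u^{r_{k-2}})=\op\sum_{0\leq|\Si|}d_\Si u^{r_{k-1}}\frac{\dr}{\dr c^{r_{k-1}}_\Si}u^{r_{k-2}},$$
which is exactly the left-hand side of (\ref{w34}). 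The competing terms $u(u^{r_{k-2}})$ and the $\g$-corrections live in the disjoint monomial sectors $c^r c^{r_{k-1}}$ and higher, so they cannot cancel it; therefore $\bb^2=0$ forces $u^{(k)}(u^{r_{k-2}})=0$ identically.

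It remains to convert this into the two stated off-shell statements. Since (\ref{w34}) holds, by Theorem \ref{w35}, as the identity $u^{(k)}(u^{r_{k-2}})=\ol\dl(\al^{r_{k-2}})$, the vanishing just obtained gives $\ol\dl(\al^{r_{k-2}})=0$; as the right-hand side is precisely the on-shell remainder (linear in the $\cE_A$), this is the assertion that (\ref{w34}) holds \emph{off-shell}. Finally, differentiating the identity $u^{(k)}(u^{r_{k-2}})=0$ with respect to the top ghosts $c^{r_k}$ and using the involutivity $\eta(\eta(f))^\La=f^\La$ (\ref{qq1d}) --- exactly the mechanism of Propositions \ref{825} and \ref{826} turning a gauge-symmetry condition into a Noether identity --- reproduces the composition $\op\sum\Delta^{r_{k-1}}_{r_k}d_\La(\Delta^{r_{k-2}}_{r_{k-1}}\ol c_{r_{k-2}})$ with vanishing $\ol\dl$-exact remainder, i.e. the higher-stage NI (\ref{v93}) off-shell.

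I expect the main obstacle to be the ghost-degree bookkeeping of the second step: one must argue rigorously that, in the full expansion of $\bb^2(c^{r_{k-2}})$, the antifield-free corrections $\g^{(j)}$ cannot feed the top-ghost-linear sector. This rests on the two structural facts that $\g$ is antifield-free and strictly of higher polynomial order in the ghosts, and it is what guarantees the off-shell condition is extracted cleanly rather than mixed with the (genuinely undetermined) equations fixing the $\g^{(j)}$. The boundary cases need only minor care: $k=1$ is the computation on $s^A$, whereas the generators $c^{r_{N-1}}$ and $c^{r_N}$ yield no new gauge-symmetry condition and merely constrain $\g^{(N)},\g^{(N+1)}$, consistently with the absence of $(N+1)$-stage Noether identities.
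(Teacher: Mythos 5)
Your proof is correct and follows essentially the same route as the paper: nilpotency of $\bb$ is decomposed by polynomial degree in ghosts, the top-ghost-linear sector isolates $u^{(k+1)}(u^{(k)})=0$ (the equalities (\ref{w110})/(\ref{850})), which is the off-shell form of (\ref{w34}), and the $\eta$-duality mechanism of Propositions \ref{825}--\ref{826} (relations (\ref{qq1a})--(\ref{qq1d})) then transfers this to the vanishing of the right-hand sides of the higher-stage NI (\ref{v93}); your ghost-counting argument merely fills in what the paper dismisses as ``easily justified.'' The paper's proof goes one step further, using Condition \ref{v155} to conclude that each $\Delta_{r_k}$ reduces to its term $G_{r_k}$ linear in antifields, but that is additional structure beyond the stated proposition.
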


\begin{proof}
It is easily justified that, if the graded derivation $\bb$
(\ref{w109}) is nilpotent, then  the right hand sides of the
equalities (\ref{w34}) equal zero, i.e.,
\mar{850}\beq
u^{(k+1)}(u^{(k)})=0, \qquad 0\leq k\leq N-1, \qquad u^{(0)}=u.
\label{850}
\eeq
Using the relations (\ref{qq1a}) -- (\ref{qq1c}), one can show
that, in this case, the right hand sides of the higher-stage NI
(\ref{v93}) also equal zero.$^4$  It follows that the summand
$G_{r_k}$ of each cocycle $\Delta_{r_k}$ (\ref{v92'}) is
$\dl_{k-1}$-closed. Then its summand $h_{r_k}$ is also
$\dl_{k-1}$-closed and, consequently, $\dl_{k-2}$-closed. Hence it
is $\dl_{k-1}$-exact by virtue of Condition \ref{v155}. Therefore,
$\Delta_{r_k}$ contains only the term $G_{r_k}$ linear in
antifields.
\end{proof}

It follows at once from the equalities (\ref{850}) that the gauge
operator of higher-stage gauge symmetries
\be
u_{HS}=\bu-u=u^{(1)}+\cdots + u^{(N)}
\ee
is nilpotent, and $\bu(\bu)=u(\bu)$. Therefore, the nilpotency
condition for the BRST operator $\bb$ (\ref{w109}) takes the form
\mar{851}\beq
\bb(\bb)=(u+\g)(\bu) +(u+u_{HS}+\g)(\g)=0. \label{851}
\eeq
Let us denote
\be
&& \g^{(0)}=0, \qquad \g^{(k)}=\g^{(k)}_{(2)} +\cdots +
\g^{(k)}_{(k+1)},
\quad k=1,\ldots, N+1,\qquad \g^{(N+2)}=0, \\
&& \g^{r_{k-1}}_{(i)}= \op\sum_{k_1+\cdots+ k_i=k+1
-i}(\op\sum_{0\leq \La_{k_1},\ldots, \La_{k_i}}
\g^{r_{k-1},\La_{k_1},\ldots,\La_{k_i}}_{(i)r_{k_1},\ldots,r_{k_i}}c^{r_{k_1}}_{\La_{k_1}}
\cdots c^{r_{k_i}}_{\La_{k_i}}), \quad k=1,\ldots, N+1,
\ee
where $\g^{(k)}_{(i)}$ are terms of polynomial degree $2\leq i\leq
k+1$ in ghosts. Then the nilpotent property (\ref{851}) of $\bb$
falls into a set of equalities
\mar{w110,3}\ben
&& u^{(k+1)}(u^{(k)})
=0, \qquad 0\leq k\leq N-1,  \label{w110}\\
&& (u +\g^{(k+1)}_{(2)})(u^{(k)}) + u_{HS}(\g^{(k)}_{(2)})=0,
\qquad 0\leq k\leq N+1, \label{w111} \\
&& \g_{(i)}^{(k+1)}(u^{(k)}) + u (\g_{(i-1)}^{(k)}) +
u_{HS}(\g^{k}_{(i)}) + \label{w113}\\
&&\qquad \op\sum_{2\leq m\leq
i-1}\g_{(m)}(\g_{(i-m+1)}^{(k)})=0, \qquad  i-2\leq k\leq N+1,
\nonumber
\een
of ghost polynomial degree 1, 2 and $3\leq i\leq N+3$,
respectively.

The equalities (\ref{w110}) are exactly the gauge symmetry
conditions (\ref{850}) in Proposition \ref{lmp6}.

The equality (\ref{w111}) for $k=0$ reads
\mar{852}\beq
(u+ \g^{(1)})(u)=0, \qquad \op\sum_{0\leq |\La|} (d_\La(u^A)
\dr_A^\La u^B+ d_\La(\g^r)u^{B,\La}_r)=0. \label{852}
\eeq
It takes the form of the Lie antibracket
\mar{s12}\beq
[u,u]=-2\g^{(1)}(u)=-2\op\sum_{0\leq |\La|}
d_\La(\g^r)u^{B,\La}_r\dr_B \label{s12}
\eeq
of the odd gauge symmetry $u$. Its right-hand side is a non-linear
differential operator on the module $\cC^{(0)}$ of ghosts taking
values in the real space $\cG_L$ of variational symmetries.
Following Definition \ref{s8}  extended to Lagrangian theories of
odd fields, we treat it as a generalized gauge symmetry
factorizing through the gauge symmetry $u$. Thus, we come to the
following.

\begin{prop} \label{830} \mar{830}
The gauge operator (\ref{w108}) admits the nilpotent extension
(\ref{w109}) only if the Lie antibracket of the odd gauge symmetry
$u$ (\ref{w33}) is a generalized gauge symmetry factorizing
through $u$.
\end{prop}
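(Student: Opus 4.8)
The plan is to extract the statement from the lowest nontrivial piece of the nilpotency condition $\bb(\bb)=0$ and to recognise that piece as a condition on the self-antibracket of the gauge symmetry $u$. Since $\bb=\bu+\g$ (\ref{w109}) is an odd graded derivation of ghost number $1$, its square vanishes iff it annihilates each of its generating functions, so that $\bb(\bb)=0$ splits into the graded system (\ref{w110})--(\ref{w113}) organised by polynomial degree in the ghosts. I would isolate the single equality needed here, namely the degree-two relation (\ref{w111}) at stage $k=0$; because $u^{(0)}=u$ and $\g^{(0)}=0$ this collapses to (\ref{852}), that is $(u+\g^{(1)})(u)=0$, or explicitly $\sum_{0\le|\La|}(d_\La(u^A)\dr_A^\La u^B+d_\La(\g^r)u^{B,\La}_r)=0$. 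Concretely, this is just the projection of $\bb(u^A)=0$ onto the terms quadratic in the zeroth-stage ghosts $c^r$.

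I would then rewrite this as $u(u)=-\g^{(1)}(u)$ and use that for an odd graded derivation $u(u)=\frac12[u,u]$ is half the Lie antibracket, turning the nilpotency fragment into the identity (\ref{s12}), $[u,u]=-2\g^{(1)}(u)=-2\sum_{0\le|\La|}d_\La(\g^r)u^{B,\La}_r\dr_B$. Note that the left-hand side is harmless on its own: since $u$ is a variational symmetry (Theorem \ref{w35}) and variational symmetries form the real Lie algebra $\cG_L$, the antibracket $[u,u]$ already lies in $\cG_L$. In general it need only be a variational symmetry and not a gauge symmetry at all, so the entire content of the proposition is the structural form that (\ref{s12}) forces upon it.

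The decisive step is to read the right-hand side of (\ref{s12}) as the gauge derivation $u=\sum_{0\le|\La|}c^r_\La u^{B,\La}_r\dr_B$ of (\ref{w33}) with its ghost arguments replaced by $c^r\mapsto-2\g^r$ (hence $c^r_\La\mapsto-2d_\La\g^r$). Because $\g^r$ is antifield-free and of polynomial degree $\ge 2$ in the ghosts, this substitution is a genuinely non-linear differential operator on the ghost module $\cC^{(0)}$ valued in $\cG_L$; by Definition \ref{s8} extended to odd fields it yields a generalized gauge symmetry, and the very fact that it is produced by feeding ghost-valued expressions into $u$ is precisely what is meant by its factorization through $u$. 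Thus nilpotency of $\bb$ forces $[u,u]$ to be such a generalized gauge symmetry, which is the asserted necessary condition. I expect the only delicate points to be the bookkeeping that pins (\ref{852}) down as exactly the $k=0$, ghost-degree-two component of $\bb(\bb)=0$ while tracking ghost number and antifield number simultaneously, together with the sign and normalisation in $u(u)=\frac12[u,u]$; once (\ref{s12}) is in hand, the identification against Definitions \ref{s4} and \ref{s8} is essentially a matter of inspection.
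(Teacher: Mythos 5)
Your proposal is correct and follows essentially the same route as the paper: the paper likewise isolates the equality (\ref{w111}) at $k=0$ to obtain (\ref{852}), recasts it as the Lie antibracket identity (\ref{s12}), and then observes that the right-hand side is a non-linear differential operator on the ghost module $\cC^{(0)}$ with values in $\cG_L$, hence a generalized gauge symmetry factorizing through $u$ by Definition \ref{s8}. Your additional remarks (that $[u,u]$ is a priori only a variational symmetry, and the explicit substitution $c^r\mapsto -2\g^r$ exhibiting the factorization) merely make explicit what the paper leaves implicit.
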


The equalities (\ref{w111}) -- (\ref{w113}) for $k=1$ take the
form
\mar{853,4}\ben
&& (u +\g^{(2)}_{(2)})(u^{(1)}) + u^{(1)}(\g^{(1)})=0, \label{853} \\
&& \g_{(3)}^{(2)}(u^{(1)}) + (u + \g^{(1)})(\g^{(1)})=0. \label{854}
\een
In particular, if Lagrangian field theory is irreducible, i.e.,
$u^{(k)}=0$ and $\bu=u$, the BRST operator reads
\be
 \bb= u+\g^{(1)}=u^A\dr_A + \g^r\dr_r=
\op\sum_{0\leq|\La|} u^{A,\La}_r c^r_\La \dr_A +
\op\sum_{0\leq|\La|,|\Xi|}\g^{r,\La,\Xi}_{pq}c^p_\La c^q_\Xi
\dr_r,
\ee
and the nilpotency conditions (\ref{853}) - (\ref{854}) are
reduced to the equality
\mar{0691}\beq
(u + \g^{(1)})(\g^{(1)})=0. \label{0691}
\eeq
Furthermore, let a gauge symmetry $u$ be affine in fields $s^A$
and their jets. Then it follows from the nilpotency condition
(\ref{852}) that the BRST term $\g^{(1)}$ is independent of
original fields and their jets.  Then the relation (\ref{0691})
takes the form of the Jacobi identity
\mar{s11}\beq
\g^{(1)})(\g^{(1)})=0 \label{s11}
\eeq
for coefficient functions $\g^{r,\La,\Xi}_{pq}(x)$ in the Lie
antibracket (\ref{s12}).

The relations (\ref{s11}) and (\ref{s12}) motivate us to think of
the equalities (\ref{w111}) -- (\ref{w113}) in a general case of
reducible gauge symmetries as being {\it sui generis} commutation
relations and Jacobi identities, respectively. Moreover, based on
Proposition \ref{830}, we say that non-trivial gauge symmetries
are algebraically closed (in the terminology of Ref. [5])  if the
gauge operator $\bu$ (\ref{w108'}) admits the nilpotent BRST
extension $\bb$ (\ref{w109}).

\bigskip
\bigskip

\noindent {\bf VI. YANG--MILLS SUPERGAUGE THEORY}
\bigskip

Yang--Mills supergauge theory exemplifies theory of odd fields.

Let $\ccG=\ccG_0\oplus \ccG_1$ be a finite-dimensional real Lie
superalgebra with the basis $\{e_r\}$, $r=1,\ldots,m,$ and real
structure constants $c^r_{ij}$. Recall that
\be
&& c^r_{ij}=-(-1)^{[i][j]}c^r_{ji}, \qquad [r]=[i]+[j],\\
&& (-1)^{[i][b]}c^r_{ij}c^j_{ab} + (-1)^{[a][i]}c^r_{aj}c^j_{bi} +
(-1)^{[b][a]}c^r_{bj}c^j_{ia}=0,
\ee
where $[r]$ denotes the Grassmann parity of $e_r$. Given the
universal enveloping algebra $\ol \ccG$ of $\ccG$, we assume that
there is an even quadratic Casimir element $h^{ij}e_ie_j$ of
$\ol\ccG$ such that the matrix $h^{ij}$ is non-degenerate. The
Yang--Mills theory on $X=\Bbb R^n$ associated to this Lie
superalgebra is described by the DGA $\cP^*_\infty[F;Y]$ where
\be
F=\ccG\op\ot_X T^*X, \qquad Y= \ccG_0\op\ot_X T^*X.
\ee
Its local basis is $(a^r_\la)$, $[a^r_\la]=[r]$. First jets of its
elements admit the canonical splitting
\mar{f31}\beq
a^r_{\la\m}=\frac12(\cF^r_{\la\m} +
\cS^r_{\la\m})=\frac12(a^r_{\la\m}-a^r_{\m\la} +c^r_{ij}a^i_\la
a^j_\m) +\frac12(a^r_{\la\m}+ a^r_{\m\la} -c^r_{ij}a^i_\la
a^j_\m). \label{f31}
\eeq
Given a constant metric $g$ on $\Bbb R^n$, the Yang--Mills
Lagrangian reads
\be
L_{YM}=\frac14
h_{ij}g^{\la\m}g^{\bt\nu}\cF^i_{\la\bt}\cF^j_{\m\nu}d^nx.
\ee
Its variational derivatives $\cE_r^\la$ obey the irreducible NI
\be
 c^r_{ji}a^i_\la\cE_r^\la + d_\la\cE_j^\la=0.
\ee
Therefore, we enlarge the DGA $\cP^*_\infty[F;Y]$ to the DGA
\be
P^*_\infty\{0\}=\cP^*_\infty[F\op\oplus_Y E_0;Y], \qquad
E_0=X\times (\ccG_1\oplus \ccG_0),
\ee
whose local basis $(a^r_\la, c^r)$, $[c^r]= ([r]+1){\rm mod}\,2$,
contains ghosts $c^r$ of ghost number 1. Then the gauge operator
(\ref{w108'}) reads
\be
\bu= (-c^r_{ji}c^ja^i_\la + c^r_\la)\frac{\dr}{\dr a_\la^r}.
\ee
It admits the nilpotent BRST extension
\be
\bb=\bu +\xi= (-c^r_{ji}c^ja^i_\la + c^r_\la)\frac{\dr}{\dr
a_\la^r} -\frac12 (-1)^{[i]} c^r_{ij}c^ic^j\frac{\dr}{\dr c^r}.
\ee

\bigskip
\bigskip

\noindent {\bf VII. TOPOLOGICAL CHERN--SIMONS THEORY}
\bigskip

We consider gauge theory of principal connections on a principal
bundle $P\to X$ with a structure real Lie group $G$. In contrast
with the Yang--Mills Lagrangian, the Chern--Simons (henceforth CS)
Lagrangian is independent of a metric on $X$. Therefore, its
non-trivial gauge symmetries are wider than those of the
Yang--Mills one. Moreover, some of them become trivial if $\di
X=3$.

Note that one usually considers the local CS Lagrangian which is
the local CS form derived from the local transgression formula for
the Chern characteristic form. The global CS Lagrangian is well
defined, but depends on a background gauge potential.$^{16-18}$

The fiber bundle $J^1P\to C$ is a trivial $G$-principal bundle
canonically isomorphic to $C\times P\to C$.$^1$ This bundle admits
the canonical principal connection
\be
\cA =dx^\la\ot(\dr_\la +a_\la^p \ve_p) + da^r_\la\ot\dr^\la_r.
\ee
Its curvature defines the canonical $(VP/G)$-valued 2-form
\mar{f34}\beq
\gF =(da_\m^r\w dx^\m + \frac{1}{2} c_{pq}^r a_\la^p a_\m^q
dx^\la\w dx^\m)\ot e_r \label{f34}
\eeq
on $C$. Given a section $A$ of $C\to X$, the pull-back
\be
F_A=A^*\gF=\frac12 F^r_{\la\m}dx^\la\w dx^\m\ot e_r
\ee
of $\gF$ onto $X$ is the strength form of a gauge potential $A$.

Let $I_k(\chi)=b_{r_1\ldots r_k}\chi^{r_1}\cdots \chi^{r_k}$ be a
$G$-invariant polynomial of degree $k>1$ on the Lie algebra $\ccG$
of $G$. With $\gF$ (\ref{f34}), one can associate to $I_k$ the
closed $2k$-form
\mar{0757}\beq
P_{2k}(\gF)=b_{r_1\ldots r_k}\gF^{r_1}\w\cdots\w \gF^{r_k}, \qquad
k\leq \di X, \label{0757}
\eeq
on $C$ which is invariant under automorphisms of $C$ induced by
vertical automorphisms of $P$. Given a section $B$ of $C\to X$,
the pull-back $ P_{2k}(F_B)=B^*P_{2k}(\gF)$ of $P_{2k}(\gF)$ is a
closed characteristic form on $X$. Let the same symbol stand for
its pull-back onto $C$. Since $C\to X$ is an affine bundle and the
de Rham cohomology of $C$ equals that of $X$, the forms
$P_{2k}(\gF)$ and $P_{2k}(F_B)$ possess the same cohomology class
$[P_{2k}(\gF)]=[P_{2k}(F_B)]$ for any principal connection $B$.
Thus, $I_k(\chi)\mapsto [P_{2k}(F_B)]\in H^*_{DR}(X)$ is the
familiar Weil homomorphism. Furthermore, we obtain the
transgression formula
\mar{r65}\beq
P_{2k}(\gF)-P_{2k}(F_B)=d\gS_{2k-1}(a,B)\label{r65}
\eeq
on $C$.$^{16}$ Its pull-back by means of a section $A$ of $C\to X$
gives the transgression formula
\be
P_{2k}(F_A)-P_{2k}(F_B)=d \gS_{2k-1}(A,B)
\ee
on $X$. For instance, if $P_{2k}(\gF)$ is the characteristic Chern
$2k$-form, then $\gS_{2k-1}(a,B)$ is the CS $(2k-1)$-form. In
particular, one can choose the local section $B=0$. Then,
$\gS_{2k-1}(a,0)$ is the local CS form. Let $\gS_{2k-1}(A,0)$ be
its pull-back onto $X$ by means of a section $A$ of $C\to X$. Then
the CS form $\gS_{2k-1}(a,B)$ (\ref{r65}) admits the decomposition
\mar{r75}\beq
\gS_{2k-1}(a,B)=\gS_{2k-1}(a,0) -\gS_{2k-1}(B,0) +dK_{2k-1}.
\label{r75}
\eeq
The transgression formula (\ref{r65}) also yields the
transgression formula
\mar{0742}\ben
&& P_{2k}(\cF)-P_{2k}(F_B)=d_H(h_0 \gS_{2k-1}(a,B)), \nonumber\\
&& h_0 \gS_{2k-1}(a,B)=k\op\int^1_0 \cP_{2k}(t,B)dt, \label{0742}\\
&& \cP_{2k}(t,B)=b_{r_1\ldots
r_k}(a^{r_1}_{\m_1}-B^{r_1}_{\m_1})dx^{\m_1}\w
\cF^{r_2}(t,B)\w\cdots \w \cF^{r_k}(t,B),\nonumber\\
&& \cF^{r_j}(t,B)= \frac12[ ta^{r_j}_{\la_j\m_j}
+(1-t)\dr_{\la_j}B^{r_j}_{\m_j}
- ta^{r_j}_{\m_j\la_j} -(1-t)\dr_{\m_j}B^{r_j}_{\la_j}+\nonumber\\
&& \qquad \frac12c^{r_j}_{pq} (ta^p_{\la_j}
+(1-t)B^p_{\la_j})(ta^q_{\m_j} +(1-t)B^q_{\m_j}]dx^{\la_j}\w
dx^{\m_j}\ot e_r,\nonumber
\een
on $J^1C$. If $2k-1=\di X$, the density
$L_{CS}(B)=h_0\gS_{2k-1}(a,B)$ (\ref{0742}) is the global CS
Lagrangian of topological CS theory. The decomposition (\ref{r75})
induces the decomposition
\mar{0747}\beq
L_{CS}(B)=h_0\gS_{2k-1}(a,0) -\gS_{2k-1}(B,0) +d_H h_0K_{2k-1}.
\label{0747}
\eeq

For instance, if $\di X=3$, the global CS Lagrangians reads
\mar{s20}\ben
&& L_{CS}(B)= [\frac12 h_{mn} \ve^{\al\bt\g}a^m_\al(\cF^n_{\bt\g}
-\frac13 c^n_{pq}a^p_\bt a^q_\g)]d^nx - \label{s20} \\
&& \qquad [\frac12 h_{mn} \ve^{\al\bt\g}B^m_\al(F(B)^n_{\bt\g}
-\frac13 c^n_{pq}B^p_\bt B^q_\g)]d^nx -d_\al(h_{mn}
\ve^{\al\bt\g}a^m_\bt B^n_\g)d^nx, \nonumber
\een
where $\ve^{\al\bt\g}$ is the skew-symmetric Levi--Civita tensor.

Since the density $-\gS_{2k-1}(B,0) +d_Hh_0K_{2k-1}$ is
variationally trivial, the global CS Lagrangian (\ref{0747})
possesses the same NI and gauge symmetries as the local one
$L_{CS}=h_0\gS_{2k-1}(a,0)$. They are the following.

Infinitesimal generators of local one-parameter groups of
automorphisms of a principal bundle $P$ are $G$-invariant
projectable vector fields $v_P$ on $P$. They are identified with
sections
\mar{0745}\beq
v_P=\tau^\la\dr_\la +\xi^r e_r \label{0745}
\eeq
of the vector bundle $T_GP=TP/G\to X$, and yield vector fields
\mar{0653}\beq
v_C=\tau^\la\dr_\la +(-c^r_{pq}\xi^pa^q_\la +\dr_\la \xi^r
-a^r_\m\dr_\la \tau^\m)\dr^\la_r \label{0653}
\eeq
on the bundle of principal connections $C$.$^1$  Sections $v_P$
(\ref{0745}) play a role of gauge parameters. One can show that
vector fields (\ref{0653}) are variational symmetries of the
global CS Lagrangian $L_{CS}(B)$. By virtue of Proposition
\ref{812}, the vertical part
\mar{0785}\beq
v_V=(-c^r_{pq}\xi^pa^q_\la +\dr_\la \xi^r -a^r_\m\dr_\la \tau^\m
-\tau^\m a^r_{\m\la} )\dr^\la_r \label{0785}
\eeq
of a vector field $v_C$ (\ref{0653}) is also a variational
symmetry of $L_{CS}(B)$.

Let us consider the DGA $\cP^*_\infty[T_GP;C]$ possessing the
local basis $(a^r_\la, c^\la, c^r)$ of even fields $a^r_\la$ and
odd ghosts $c^\la$, $c^r$. Substituting these ghosts for gauge
parameters in the vector field $v_V$ (\ref{0785}), we obtain the
odd vertical graded derivation
\mar{0781}\beq
u=(-c^r_{pq}c^pa^q_\la + c^r_\la -c^\m_\la a^r_\m -c^\m
a_{\m\la}^r)\dr^\la_r \label{0781}
\eeq
of the DGA $\cP^*_\infty[T_GP;C]$. This graded derivation as like
as vector fields $v_V$ (\ref{0785}) is a variational and,
consequently, gauge symmetry of the CS Lagrangian $L_{CS}(B)$. By
virtue of the formulas (\ref{0656}) -- (\ref{0657}), the
corresponding NI read
\mar{s15}\beq
\ol\dl\Delta_j= -c^r_{ji}a^i_\la\cE_r^\la - d_\la\cE_j^\la=0,
\qquad \ol\dl\Delta_\m=-
 a^r_{\m\la}\cE^\la_r +d_\la(a^r_\m\cE^\la_r)=0. \label{s15}
\eeq
They are irreducible non-trivial, unless $\di X=3$. Therefore, the
gauge operator (\ref{w108}) is $\bu=u$. It admits the nilpotent
BRST extension
\be
\bb=(-c^r_{ji}c^ja^i_\la + c^r_\la -c^\m_\la a^r_\m -c^\m
a_{\m\la}^r)\frac{\dr}{\dr a_\la^r} -\frac12
c^r_{ij}c^ic^j\frac{\dr}{\dr c^r} +c^\la_\m c^\m\frac{\dr}{\dr
c^\la}.
\ee

If $\di X=3$, the CS Lagrangian takes the form (\ref{s20}), and
the corresponding Euler--Lagrange operator reads
\be
\dl L_{CS}(B)=\cE^\la_r \th^r_\la\w d^nx, \qquad \cE^\la_r=h_{rp}
\ve^{\la\bt\g}\cF^p_{\bt\g}.
\ee
A glance at the NI (\ref{s15}) shows that they are equivalent to
NI
\mar{s16}\beq
\ol\dl\Delta_j=-c^r_{ji}a^i_\la\cE_r^\la - d_\la\cE_j^\la=0,\qquad
\ol\dl\Delta'_\m= \ol\dl\Delta_\m
+a^r_\m\ol\dl\Delta_r=c^\m\cF^r_{\la\m}\cE^\la_r=0. \label{s16}
\eeq
These NI define the gauge symmetry $u$ (\ref{0781}) written in the
form
\mar{s18}\beq
u=(-c^r_{pq}c'^pa^q_\la + c'^r_\la +c^\m\cF^r_{\la\m})\dr^\la_r
\label{s18}
\eeq
where $c'^r=c^r-a^r_\m c^\m$. It is readily observed that, if $\di
X=3$, the NI $\ol\dl\Delta'_\m$ (\ref{s16}) are trivial. Then the
corresponding part $c^\m\cF^r_{\la\m})\dr^\la_r$ of the gauge
symmetry $u$ (\ref{s18}) is also trivial. Consequently, the
non-trivial gauge symmetry of the CS Lagrangian (\ref{s20}) is
\be
u=(-c^r_{pq}c'^pa^q_\la + c'^r_\la)\dr^\la_r.
\ee

\bigskip
\bigskip

\noindent {\bf VIII. GAUGE GRAVITATION THEORY}
\bigskip

Gravitation theory can be formulated as gauge theory on natural
bundles $T$ over an oriented four-dimensional manifold
$X$.$^{19,20}$ It is  metric-affine gravitation theory whose
Lagrangian $L_{MA}$ is invariant under general covariant
transformations. Infinitesimal generators of local one-parameter
groups of these transformations are the functorial lift (i.e., the
Lie algebra monomorphism) of vector fields on $X$ onto a natural
bundle. Vector fields on $X$ are gauge parameters of general
covariant transformations. Natural bundles are exemplified by
tensor bundles over $X$. The principal bundle $LX$ of linear
frames in the tangent bundle $TX$ of $X$ and associated tensor
bundles exemplify natural bundle.

Dynamic variables of gauge gravitation theory on natural bundles
are linear connections and pseudo-Riemannian metrics on $X$.
Linear connections on $X$ are principal connections on the
principal frame bundle $LX$ with the structure group
$GL_4=GL^+(4,\Bbb R)$. They are represented by global sections of
the quotient bundle $C_K=J^1LX/GL_4$. It is also a natural bundle
provided with bundle coordinates $(x^\la,k_\la{}^\nu{}_\al)$ such
that components $k_\la{}^\nu{}_\al\circ K=K_\la{}^\nu{}_\al$ of a
section $K$ of $C_K\to X$ are coefficient of the linear connection
\be
K=dx^\la\ot (\dr_\la + K_\la{}^\m{}_\nu \dot x^\nu\dot\dr_\mu)
\ee
on $TX$ with respect to the holonomic bundle coordinates
$(x^\la,\dot x^\la)$. The first order jet manifold $J^1C_K$ of
$C_K$ admits the canonical decomposition taking the coordinate
form
\be
&& k_{\la\m}{}^\al{}_\bt=\frac12(R_{\la\m}{}^\al{}_\bt +
S_{\la\m}{}^\al{}_\bt)=\frac12(k_{\la\m}{}^\al{}_\bt -
k_{\m\la}{}^\al{}_\bt + k_\m{}^\al{}_\ve k_\la{}^\ve{}_\bt
-k_\la{}^\al{}_\ve k_\m{}^\ve{}_\bt)+  \\
&& \qquad \frac12(k_{\la\m}{}^\al{}_\bt + k_{\m\la}{}^\al{}_\bt -
k_\m{}^\al{}_\ve k_\la{}^\ve{}_\bt +k_\la{}^\al{}_\ve
k_\m{}^\ve{}_\bt).
\ee
If $K$ is a section of $C_K\to X$, then $R\circ K$ is the
curvature of a linear connection $K$.

In gravitation theory, the linear frame bundle $LX$ is assumed to
admit a Lorentz structure, i.e., reduced principal subbundles with
the structure Lorentz group $SO(1,3)$. By virtue of the well-known
theorem, there is one-to-one correspondence between these
subbundles and the global sections of the quotient bundle
$\Si=LX/SO(1,3)$. Its sections are pseudo-Riemannian metrics on
$X$. Being an open subbundle of the tensor bundle $\op\vee^2 TX$,
the bundle $\Si$ is provided with bundle coordinates
$(x^\la,\si^{\m\nu})$.

The total configuration space of gauge gravitation theory is the
bundle product $\Si\times C_K$ coordinated by
$(x^\la,\si^{\al\bt},  k_\mu{}^\al{}_\bt)$. This is a natural
bundle admitting the functorial lift
\mar{gr3}\ben
&& \wt\tau_{\Si K}=\tau^\m\dr_\m +(\si^{\nu\bt}\dr_\nu \tau^\al
+\si^{\al\nu}\dr_\nu \tau^\bt)\frac{\dr}{\dr \si^{\al\bt}} +
\label{gr3}\\
&& \qquad (\dr_\nu \tau^\al k_\m{}^\nu{}_\bt -\dr_\bt \tau^\nu
k_\m{}^\al{}_\nu -\dr_\mu \tau^\nu k_\nu{}^\al{}_\bt
+\dr_{\m\bt}\tau^\al)\frac{\dr}{\dr k_\mu{}^\al{}_\bt} \nonumber
\een
of vector fields $\tau=\tau^\m\dr_\m$ on $X$.$^{1,13}$ Let us
consider the DGA $\cS^*_\infty[\Si\times C_K]$ possessing the
local basis $(\si^{\al\bt},  k_\mu{}^\al{}_\bt)$, and let us
enlarge it to the DGA
\mar{07100}\beq
\cP^*_\infty[TX; \Si\times C_K] \label{07100}
\eeq
possessing the local basis $(\si^{\al\bt},  k_\mu{}^\al{}_\bt,
c^\m)$ of even fields $(\si^{\al\bt}, k_\mu{}^\al{}_\bt)$ and odd
ghosts $(c^\m)$. Taking the vertical part of vector fields
$\wt\tau_{K\Si}$ (\ref{gr3}) and replacing gauge parameters
$\tau^\la$ with ghosts $c^\la$, we obtain the odd vertical graded
derivation
\be
&&u=u^{\al\bt}\frac{\dr}{\dr\si^{\al\bt}} +u_\m{}^\al{}_\bt
\frac{\dr}{\dr k_\mu{}^\al{}_\bt} =(\si^{\nu\bt} c_\nu^\al
+\si^{\al\nu} c_\nu^\bt-c^\la\si_\la^{\al\bt})\frac{\dr}{\dr
\si^{\al\bt}}+
\\
&& \qquad (c_\nu^\al k_\m{}^\nu{}_\bt -c_\bt^\nu k_\m{}^\al{}_\nu
-c_\mu^\nu k_\nu{}^\al{}_\bt +c_{\m\bt}^\al-c^\la
k_{\la\mu}{}^\al{}_\bt)\frac{\dr}{\dr k_\mu{}^\al{}_\bt}
\ee
of the DGA (\ref{07100}). It is a gauge symmetry of a gravitation
Lagrangian $L_{MA}$. Then by virtue of the formulas (\ref{0656})
-- (\ref{0657}), the Euler--Lagrange operator
\be
(\cE_{\al\bt} d\si^{\al\bt} + \cE^\m{}_\al{}^\bt
dk_\m{}^\al{}_\bt)\w d^4x
\ee
of this Lagrangian obeys the NI
\be
&& -\si^{\al\bt}_\la \cE_{\al\bt} - 2d_\m(\si^{\m\bt}\cE_{\la\bt} -
k_{\la\m}{}^\al{}_\bt\cE^\m{}_\al{}^\bt- \\
&&\qquad d_\m[(k_\nu{}^\m{}_\bt\dl^\al_\la - k_\nu{}^\al{}_\la \dl^\m_\bt
- k_\la{}^\al{}_\bt \dl^\m_\nu)\cE^\nu{}_\al{}^\bt] +
d_{\m\bt}\cE^\m{}_\la{}^\bt=0.
\ee
These NI are irreducible. Therefore, the gauge operator
(\ref{w108}) is $\bu=u$. Its BRST extension reads$^{13}$
\be
\bb=u + c^\la_\m c^\m\frac{\dr}{\dr c^\la}.
\ee
Note that this BRST operator differs from that in Ref. [21], where
metric-affine gravitation theory is treated as gauge theory of the
Poincar\'e group.$^{22-24}$

\bigskip
\bigskip

\noindent {\bf IX. TOPOLOGICAL BF THEORY}
\bigskip

We address the topological BF theory of two exterior forms $A$ and
$B$ of form degree $|A|+|B|=\di X-1$ on a smooth manifold
$X$.$^{25}$ It is reducible degenerate Lagrangian theory which
satisfies the homology regularity condition.$^{3,4}$ Its dynamic
variables are exterior forms $A$ and $B$ of form degree $|A|
+|B|=n-1$ on a manifold $X$. They are sections of the bundle
\be
Y=\op\w^pT^*X\oplus \op\w^qT^*X, \qquad p+q=n-1,
\ee
coordinated by $(x^\la, A_{\m_1\ldots\m_p},B_{\nu_1\ldots\nu_q})$.
Without a loss of generality, let $q$ be even and $q\geq p$. The
corresponding DGA is $\cO^*_\infty Y$. There are the canonical
$p$- and $q$-forms
\be
A=\frac{1}{p!}A_{\m_1\ldots\m_p}dx^{\m_1}\w\cdots\w dx^{\m_p},
\qquad
B=\frac{1}{q!}B_{\nu_{p+1}\ldots\nu_q}dx^{\nu_{p+1}}\w\cdots\w
dx^{\nu_p}
\ee
on $Y$. A Lagrangian of the topological BF theory reads
\be
L_{\rm BF}=A\w d_HB.
\ee
Its Euler--Lagrange equations $d_HA=0$, $d_HB=0$ obey the NI
\be
d_Hd_HA= 0, \qquad d_Hd_H B= 0.
\ee

Given the vector bundles
\be
&& E_k=\op\w^{p-k-1}T^*X\op\times_X \op\w^{q-k-1}T^*X, \qquad 0\leq
k< p-1, \\
&& E_k=\Bbb R \op\times_X
\op\w^{q-p}T^*X, \qquad k=p-1, \\
&& E_k=\op\w^{q-k-1}T^*X, \quad p-1<k<q-1, \\
&& E_{q-1}=X\times \Bbb R,
\ee
let us consider the DGA $P_\infty^*\{q-1\}$ with the local basis
\be
&& \{A_{\m_1\ldots\m_p}, B_{\nu_{p+1}\ldots\nu_q},
\ve_{\m_2\ldots\m_p},\ldots,\ve_{\m_p},\ve,\xi_{\nu_{p+2}\ldots\nu_q},
\ldots, \xi_{\nu_q},\xi,\\
&&\qquad \ol A^{\m_1\ldots\m_p}, \ol B^{\nu_{p+1}\ldots\nu_q},
\ol\ve^{\m_2\ldots\m_p}, \ldots,\ol\ve^{\m_p}, \ol \ve, \ol
\xi^{\nu_{p+2}\ldots\nu_q}, \ldots, \ol \xi^{\nu_q},\ol \xi\}.
\ee
Then the gauge operator (\ref{w108'}) reads
\be
&& \bu= d_{\m_1}\ve_{\m_2\ldots\m_p}\frac{\dr}{\dr
A_{\m_1\m_2\ldots\m_p}} +
d_{\nu_{p+1}}\xi_{\nu_{p+2}\ldots\nu_q}\frac{\dr}{\dr
B_{\nu_{p+1}\nu_{p+2}\ldots\nu_q}}+
[d_{\m_2}\ve_{\m_3\ldots\m_p}\frac{\dr}{\dr
\ve_{\m_2\m_3\ldots\m_p}}+\cdots  \\
&& \qquad  +d_{\m_p}\ve\frac{\dr}{\dr \ve^{\m_p}}]+
[d_{\nu_{p+2}}\xi_{\nu_{p+3}\ldots\nu_q} \frac{\dr}{\dr
\xi_{\nu_{p+2}\nu_{p+3}\ldots\nu_q}}+\cdots +
d_{\nu_q}\xi\frac{\dr}{\dr \xi^{\nu_q}}].
\ee
This operator is obviously nilpotent and, thus, is the BRST
operator $\bb=\bu$.

\end{document}